\documentclass[11pt,a4paper]{article}

\usepackage[utf8]{inputenc}
\usepackage[T1]{fontenc}
\usepackage{lmodern}
\usepackage[margin=2.55cm]{geometry}
\usepackage{amsmath,amssymb,amsfonts,amsthm}
\usepackage{booktabs}
\usepackage{tabularx}
\usepackage{array}
\usepackage{multirow}
\usepackage{calc}
\usepackage{microtype}
\usepackage{tikz}
\usetikzlibrary{shapes.geometric, arrows.meta, positioning, calc, fit}
\usepackage{hyperref}

\hypersetup{
    colorlinks=true,
    linkcolor=blue!70!black,
    citecolor=blue!70!black,
    urlcolor=blue!70!black,
    pdftitle={GKS FHE: Security Analysis},
    pdfauthor={John Graham, Soundes Marzougui, Milos Pavlovic, Vladimir Shpilrain}
}

\makeatletter
\renewcommand*\l@section[2]{%
  \ifnum \c@tocdepth >\z@
    \addpenalty\@secpenalty
    \addvspace{0.4em \@plus\p@}%
    \setlength\@tempdima{1.8em}%
    \begingroup
      \parindent \z@ \rightskip \@pnumwidth
      \parfillskip -\@pnumwidth
      \leavevmode \bfseries
      \advance\leftskip\@tempdima
      \hskip -\leftskip
      #1\nobreak\normalfont\leaders\hbox{$\m@th\mkern 4.5mu\hbox{.}\mkern 4.5mu$}\hfill
      \nobreak\hb@xt@\@pnumwidth{\hss\normalfont #2}\par
    \endgroup
  \fi}
\makeatother

\theoremstyle{definition}
\newtheorem{definition}{Definition}[section]
\newtheorem{remark}{Remark}[section]

\theoremstyle{plain}
\newtheorem{theorem}{Theorem}[section]
\newtheorem{lemma}{Lemma}[section]
\newtheorem{proposition}{Proposition}[section]

\newcommand{\Z}{\mathbb{Z}}
\newcommand{\F}{\mathbb{F}}
\newcommand{\negl}{\mathrm{negl}}
\newcommand{\Adv}{\mathrm{Adv}}
\newcommand{\Enc}{\mathrm{Enc}}
\newcommand{\Dec}{\mathrm{Dec}}
\newcommand{\KeyGen}{\mathrm{KeyGen}}

\title{\textbf{\LARGE GKS Fully Homomorphic Encryption:\\
Security Analysis}}
\author{
    \textbf{John Graham}\thanks{\texttt{jgraham@zerofhe.ai}, Zero Inc., Connecticut, USA} \and
    \textbf{Soundes Marzougui}\thanks{\texttt{marzougui.soundes@gmail.com}, Qubit Guard, Berlin, Germany} \and
    \textbf{Milo\v{s} Pavlovi\'c}\thanks{\texttt{milos@zerofhe.ai}, Zero Inc., Belgrade, Serbia} \and
    \textbf{Vladimir Shpilrain}\thanks{\texttt{vshpilrain@ccny.cuny.edu}, CUNY, New York, USA}
}
\date{October 2026}

\begin{document}

\maketitle

\begin{abstract}
All traditional encryption schemes share a fundamental operational vulnerability: encrypted data must be decrypted before it can be used. Fully homomorphic encryption (FHE) eliminates this vulnerability by enabling computation directly on encrypted data, but existing lattice-based FHE schemes impose substantial practical constraints that limit deployment.

This paper presents a formal security analysis of the Gribov-Kahrobaei-Shpilrain (GKS)  private-key FHE scheme \cite{GKS18}. GKS operates over the commutative idempotent ring $S_n \cong \bigoplus_{i=1}^{2^n} \Z_p$, a direct sum of $2^n$ copies of $\Z_p$, and requires no bootstrapping or de-linearization, produces bit-exact decryption, supports native encrypted search, and scales to unbounded depth.

GKS is private-key: no external $\Enc$ oracle and no plaintext is supplied to the evaluator; there is no public slot-to-coordinate map and no labelled plaintext--ciphertext pairs. The in-scope games are therefore single-ciphertext IND-EAV and multi-ciphertext COA; KPA/CPA are not claimed. Single-ciphertext IND-EAV holds after conditioning on the publicly observable rank (equivalently, Hamming weight). Unconditional advantage zero is not claimed. Equal-weight ensembles showed no distinguisher above the sampling noise floor (Appendix~A.2). A proof that the statistical distance is negligible remains open, so we do not claim a completed ITS theorem. We analyze multi-ciphertext COA: the adversary holds many unlabeled ciphertexts and may add and multiply. The residual attacks in that setting are collision harvest (rank-order, kernel-span, then ideal closure). Both require at least one full-pack collision. ZCG (Zero Collision Guarantee) is SIMD packing, an epoch-unique nonce, and an Alice-side duplicate check, and is designed to enforce pack-uniqueness at the plaintext and intermediate layers. It is not yet fully implemented (\S\ref{sec:zcg}). Until it is enforced, COA security rests on collision scarcity, ideal rotation, and data hygiene. We do not claim KPA or CPA.

Relevant attack vectors are discussed in Section~\ref{sec:attack_vectors}. Post-ideal-recovery security in the commutative $\Z_p$ instantiation at default parameters is modest. Two patent-protected variants (composite modulus $\Z_N$, US 10,396,976; non-commutative matrix ring, US 9,942,031) provide stronger post-ideal defense. The scheme's private-key architecture, exact arithmetic, and absence of bootstrapping constitute a qualitatively different design point from lattice-based FHE.

Two companion papers address key management \cite{GKSKey} and performance \cite{GKSMetrics}. Through six mechanisms, GKS provides a framework for managing a broad range of use cases. On commodity hardware with no GPU, double encryption, SIMD, and rings 6/7/8, the companion measurements report approximately $2.4~\mu\text{s/slot}$ encryption, $2~\text{ns/slot}$ addition, $31~\text{ns/slot}$ multiplication, $2.3~\mu\text{s/slot}$ decryption, and expansion $16\times$. Chained multiplication stays at $\sim 31~\text{ns/slot}$ through 1,000,000 iterations \cite{GKSMetrics}.
\end{abstract}

\noindent\textbf{Keywords:} fully homomorphic encryption, private-key FHE, retraction ambiguity, idempotent rings, zero-pattern leakage, security games, AI training, machine learning, secure multi-party computation, SMPC, CKKS, BFV, BGV, RLWE, quantum resistance, bijective encoding, bootstrapping, exact arithmetic, lossless encryption, private search, $\text{IND-CPA}^D$, healthcare data, financial services data.

\vspace{1.5em}
\hrule
\tableofcontents
\vspace{1.5em}

\section{Introduction}
\label{sec:intro}
Fully homomorphic encryption (FHE) enables computation on encrypted data without decryption. Since Gentry's construction \cite{Gen09}, the dominant trajectory has pursued lattice-based public-key schemes---BGV \cite{BGV12}, BFV \cite{Bra12, FV12}, CKKS \cite{CKKS17}---all relying on RLWE and requiring bootstrapping.

The GKS scheme \cite{GKS18} departs fundamentally. It is a private-key scheme over the commutative idempotent ring $S_n \cong \bigoplus_{i=1}^{2^n} \Z_p$, achieving: (1) unlimited multiplicative depth without bootstrapping, (2) native encrypted search, (3) a single algorithm for all operations, (4) bit-exact decryption, and (5) a conditional single-ciphertext IND-EAV result (indistinguishability holds after conditioning on the publicly observable rank; Appendix~\ref{app:theorem1}).

The cryptographic construction analyzed in this work is the GKS fully homomorphic encryption (FHE) scheme introduced by Gribov, Kahrobaei, and Shpilrain. We retain the name GKS to maintain continuity with the existing cryptographic literature. ZeroFHE is the company developing and implementing the construction and the additional security and operational mechanisms described in this work. These mechanisms, including collision prevention and the associated ZCG procedure, are intended to address practical attack paths identified in the course of analyzing and implementing GKS. Accordingly, references to GKS in this paper denote the underlying cryptographic construction, while ZeroFHE refers to its implementation and associated system-level enhancements.

Because the symmetric non-lattice line to which GKS belongs has a poor cryptanalytic track record, \S\ref{sec:prior_schemes} gives a complete accounting of the prior schemes in this family, the published attacks that broke them, and the structural reasons GKS does not fall to those attacks.

This paper provides the first detailed security analysis of the GKS scheme in the ePrint literature, with formal security games, a complete retraction counting proof, preliminary cryptanalysis, and comparative analysis.

\subsection{Contributions}
\label{sec:contributions}
\begin{enumerate}
    \item \textbf{Conditional single-ciphertext IND-EAV:} The advantage is negligible after conditioning on the publicly observable rank (equivalently, Hamming weight); unconditional advantage zero is not claimed. Empirical checks on equal-weight plaintexts found no distinguisher above the sampling noise floor (Appendix~\ref{app:theorem1}).
    \item \textbf{A reframing of IND-CPA in deployment terms:} The scheme is not intended for an exposed encryption-oracle model, and the private-key deployment denies such an oracle to every external party. Under an oracle the scheme would not be IND-CPA secure; the zero-pattern leakage channel is closed by bijective encoding and ideal rotation.
    \item \textbf{A KeyGen description-space count (Appendix~\ref{app:theorem2}):} The conservative figure is 10,305 bits ($533,569$ bits nominally, if the unrestricted triangular family is admissible); either is a description-space count under the counting assumptions stated there, not a cryptographic security level, and not residual uncertainty after $I$ is known.
    \item \textbf{A retraction-ambiguity result for non-ring plaintexts:} Existence of multiple retractions is not KPA indistinguishability. Labelled affine recovery (Appendix~\ref{app:affine}) is out of scope of the intended architecture.
    \item \textbf{Preliminary cryptanalysis} of nine distinct attack strategies.
    \item \textbf{Mitigation for zero-pattern leakage} with an algebraic soundness proof for the bijective encoding option.
    \item \textbf{Mitigations for collision harvest attacks} (same-value kernel harvest: rank order and constructed zeros; \S\ref{sec:zcg}).
    \item \textbf{Formal hardness assumptions} (RI, ZKI, RI-D, TBI) with a two-phase security model: defense-in-depth analysis for the hypothetical post-ideal-recovery case (\S\ref{sec:post_ideal}).
    \item \textbf{Performance metric summary} in \S\ref{sec:perf_summary}.
    \item \textbf{COA analysis of the in-scope residual:} Rank-order and same-value kernel harvest. Recovery of same-value differences under one ideal (Appendix~\ref{app:kernel_span}); ZCG (Zero Collision Guarantee, designed, not yet fully implemented; \S\ref{sec:zcg}) will remove original-plaintext and, in double encryption, intermediate layer duplicates. Retraction-ambiguity figures are description counts, not a COA proof.
\end{enumerate}

\subsection{Paper Organization}
\S\ref{sec:intro}: Introduction. \S\ref{sec:gks_construction}: Construction. \S\ref{sec:security_analysis}: Security Analysis. \S\ref{sec:deployment_options}: Deployment Options. \S\ref{sec:post_ideal}: Post Ideal Recovery Hardness. \S\ref{sec:comparative_analysis}: Comparative Analysis. \S\ref{sec:quantum}: Quantum Resistance. \S\ref{sec:empirical}: Empirical Validation. \S\ref{sec:future_work}: Future Work. \S\ref{sec:conclusion}: Conclusion. Appendix~\ref{app:proofs}: Proofs. Appendix~\ref{app:stat_tests}: Statistical Tests.

\subsection{Summary Table}
\begin{table}[h]
\centering
\caption{Security by attack class}
\label{tab:security_by_class}
\begin{tabularx}{\textwidth}{p{3.2cm}p{3.8cm}X}
\toprule
\textbf{Attack class} & \textbf{Applicability} & \textbf{Security status} \\
\midrule
IND-EAV (single-ciphertext) & In scope & IND-EAV holds after conditioning on the publicly observable rank (equivalently, Hamming weight). Unconditional advantage zero is not yet claimed (Theorem~\ref{thm:indeav}, Appendix~\ref{app:theorem1}). Coordinate-distance tests supply supporting evidence for the equal-weight case only. Uniformity tests are ongoing. \\
\addlinespace
COA (Multi-ciphertext) & In scope & Residual issue: requires a same-value pair under one ideal. Mitigated if ZCG is enforced (\S\ref{sec:zcg}); ZCG is designed, not yet fully implemented. \\
\addlinespace
KPA & Out of scope by architecture (no plaintext access; KPA resistance is not claimed) & Out of scope: no labelled pairs and no $\Enc$ oracle. If those are supplied, affine recovery applies (Appendix~\ref{app:affine}). Not claimed. \\
\addlinespace
IND-CPA & Out of scope by architecture (no encryption oracle) & CPA resistance is not claimed. \\
\addlinespace
$\text{IND-CPA}^D$ & Out of scope by architecture (no decryption oracle) & Exact arithmetic; LM21-style noise channel does not apply. \\
\bottomrule
\end{tabularx}
\end{table}

\section{GKS Construction}
\label{sec:gks_construction}

\subsection{Algebraic Defense Structure}
The scheme's algebraic design \cite{GKS18}:
\begin{itemize}
    \item The \textbf{secret random permutation $\pi$} is applied to the orthogonal idempotent basis before the triangular conversion. Before $I$ is recovered, even if an attacker somehow decomposes $S_r$ into its abstract semisimple structure (product of fields), they still do not know which component corresponds to which plaintext ``slot'' or generator. Wedderburn decomposition yields only the \emph{abstract} ring isomorphism type; the concrete labeling of the basis elements is hidden by $\pi$. After $I$ is recovered, quotient canonicalization supplies a coordinate system in which the secret permutation no longer gates plaintext access; matching the resulting canonical components to plaintext slots is absorbed into the subsequent embedding search (see \S\ref{sec:ri} and \S\ref{sec:where_gks_differs}).
    \item The \textbf{secret embedding $\alpha$} (and the associated isomorphism $\varphi: C/I \cong P$) further obscures how the plaintext ring $P$ sits inside the larger ciphertext ring $C$. Knowing $I$ alone does not reveal this coordinate mapping.
    \item Apply a secret permutation $\pi$ on the orthogonal idempotents, then an invertible triangular transformation whose supports are nested: $f_1 = e_{\pi(1)}$, and for $j \ge 2$, $f_j = e_{\pi(j)} + \sum_{i \in F_j} c_i e_{\pi(i)}$, where the index sets satisfy $F_k \subseteq F_j$ whenever $k \le j$ and each coefficient $c_i \in \{0,1\}$. This nested-support condition guarantees the public multiplication rule $f_j f_k = f_k$ for all $k \le j$.
\end{itemize}
After $I$ is recovered in the commutative $\Z_p$ instantiation, neither of these holds: quotient canonicalization becomes the primary attack path, and \S\ref{sec:ri} and \S\ref{sec:where_gks_differs} show that the decryption map can then be read off directly rather than remaining ambiguous. Patent-protected instantiations ($\Z_N$ composite modulus, US 10,396,976; non-commutative matrix ring, US 9,942,031) provide some additional defense-in-depth in that post-$I$ setting (not yet quantified; see \S\ref{sec:future_work}).

\subsection{The Ring $S_n$}
\[
S_n = \langle x_1, \dots, x_n \mid p \cdot 1 = 0,\; x_i^2 = x_i,\; x_i x_j = x_j x_i \rangle \cong \bigoplus_{i=1}^{2^n} \Z_p.
\]
Dimension $2^n$ over $\Z_p$; for $n=7$, dimension 128, $|S_n| = p^{128}$.

\subsection{Orthogonal Idempotent Basis}
For $F \subseteq \{1,\dots,n\}$, define $e_F = \prod_{i \in F} x_i \cdot \prod_{j \notin F}(1-x_j)$. These $2^n$ elements are pairwise orthogonal idempotents forming a basis. Any sum of distinct $e_F$ is an idempotent, giving $2^{2^n}$ total. Multiplication in this basis is componentwise over $\Z_p$.

\subsection{Notation}
$\lambda$: security parameter. $\mathrm{PPT}$: probabilistic polynomial time. $\negl(\lambda)$: negligible function. $A \leftarrow_\$ S$: uniform sampling. $\Adv_{\mathcal{G}}^X(A)$: advantage of $A$ in game $\mathcal{G}$ for scheme $X$.

\subsection{Key Generation}
\begin{itemize}
    \item \textbf{Step 1 (Plaintext ring):} Select a value of $n$ (in this example 7). Plaintext ring: $S_n \cong \Z_p^{128}$.
    \item \textbf{Step 2 (Ideal generation):} Add generators $x_{n+1},\dots,x_r$ ($r=10$), giving $S_r$ with $\dim 2^{10}=1024$. Select ideal $I \subset S_r$ generated by $(x_m - w_m(x_1,\dots,x_{m-1}))$ for $m=n+1,\dots,r$, where each $w_m$ is a random idempotent of $S_{m-1}$. $\dim(I) = 2^r - 2^n = 896$ (at $n=7$).
    \item \textbf{Step 3 (Basis transformation):} Convert to the orthogonal idempotent basis $\{e_i\}$ of $S_r$ (size $2^r$). Apply a secret permutation $\pi$ to this full set of $2^r$ idempotents, then an invertible nested-support triangular transformation: $f_1 = e_{\pi(1)}$, and for $j \ge 2$, $f_j = e_{\pi(j)} + \sum_{i \in F_j} c_i e_{\pi(i)}$, where the index sets satisfy $F_k \subseteq F_j$ whenever $k \le j$ and each $c_i \in \{0,1\}$. The resulting transformed basis of $S_r$ (size $2^r$) is the public ring presentation; ciphertexts are expressed as $2^r$ coefficients in this basis.
    \item \textbf{Step 4 (Embedding):} Select fully homomorphic embedding $\alpha: \Z_p \to S_n$ determined by $\alpha(1)=u$, an idempotent of $S_n$.
\end{itemize}
Private key: $sk = (\text{generators of } I, \text{permutation } \pi, \text{triangular transform } T, \text{embedding } \alpha)$. In the ciphertext-only model, single-ciphertext hiding holds after conditioning on the publicly observable rank (Appendix~\ref{app:theorem1}). A hidden ideal alone does not make two fixed cosets statistically indistinguishable. Key entropy tests are detailed in Appendix~\ref{app:key_entropy}.

\subsection{Encryption and Decryption}
$\Enc(sk,m)$: $E(m) = \alpha(m) + j$, where $j \leftarrow_\$ I$ (uniform random in ideal).

$\Dec(sk,c)$: Convert to standard basis, substitute $x_j \to w_j$ for $j=r,\dots,n+1$. Apply $\alpha^{-1}$. Decryption is exact.

Homomorphic property: $E(a)+E(b) \equiv E(a+b) \pmod I$; $E(a)\cdot E(b) \equiv E(ab) \pmod I$.

\subsection{The Four Components of Retraction Ambiguity}
\label{sec:four_components}
The private key $sk$ consists of four independent components: the ideal generators that determine $I$, the secret permutation $\pi$, the triangular change-of-basis $T$, and the embedding $\alpha$. We state the bound here; the full counting proof is in Appendix~\ref{app:theorem2}.

\paragraph{Component 1: Permutation $\pi$.}
$\pi$ is a uniform random permutation of the full orthogonal idempotent basis of $S_r$ (size $s = 2^r$) applied before the triangular transform.
\begin{itemize}
    \item Number of choices: $(2^r)!$.
    \item Bit contribution: $\log_2((2^r)!)$.
    \item For the reference parameters $r = 10$: $\log_2(1024!) \approx 8{,}769$ bits.
\end{itemize}

\paragraph{Component 2: Nested-support triangular transformations on the full $S_r$ basis ($s = 2^r$).}
The number of such transformations is at least $2^{2^{r-1}}$ (and far larger when all admissible nested families are counted). We use the conservative lower bound arising from free choices on the longest chain; this already contributes more than 500 bits at $r = 10$. The full unitriangular figure $s(s-1)/2 = 523{,}776$ bits counts the unrestricted family and is not established for the nested-support family (Lemma~\ref{lem:triangular}); we report it only as a nominal figure.

\paragraph{Component 3: Ideal generators.}
Each generator $(x_m - w_m)$ selects $w_m$ from the idempotents of $S_{m-1}$. At $n = 7, r = 10$, the three generators contribute $2^{128} + 2^{256} + 2^{512} = 896$ bits.

\paragraph{Component 4: Embedding $\alpha$.}
Determined by $\alpha(1) = \text{an idempotent of } S_n$. Total: $2^{2^n}$ choices, i.e., $2^n$ bits (128 bits at $n = 7$). When bijective encoding is used, the two trivial idempotents are excluded; the bit-security loss is negligible.

\paragraph{Key-generation sampling space.}
Under the parametrization $n=7, r=10$, the total number of distinct private keys that $\KeyGen$ can produce is at least (using the conservative nested-support bound of Lemma~\ref{lem:triangular} for the triangular component)
\[
N_{\KeyGen} \ge 1024! \times 2^{512} \times 2^{896} \times 2^{128}.
\]
Taking base-2 logarithms yields
\[
\log_2 N_{\KeyGen} \ge 8{,}769 + 512 + 896 + 128 = 10{,}305.
\]
This figure measures the size of the space from which keys are drawn. It is a correct lower bound on the description complexity of a private key and on the cost of na\"ive exhaustive search over the entire key space before any ideal recovery. It is not a claim about residual uncertainty after an attacker has recovered $I$ and observed the public ring presentation. Substituting the nominal unrestricted triangular count ($523{,}776$ bits) gives the $533{,}569$-bit figure shown in Table~\ref{tab:retraction_ambiguity}; that figure is not established for the nested-support family.

\paragraph{Residual ambiguity after ideal recovery.}
Once $I$ is known and the concrete public presentation (the published triangular basis and its multiplication table) has been observed, only those keys that are consistent with both pieces of data remain candidates. In particular:
\begin{itemize}
    \item The ideal-generator component is fixed by the recovered $I$ and contributes 0 residual bits.
    \item Keys that would have produced a different public presentation (different effective basis, or different structure constants) are eliminated.
\end{itemize}
The residual components that survive are therefore the permutation $\pi$, the embedding $\alpha$, and whatever freedom remains in the triangular transform $T$ once the observed multiplication table is required to be reproduced. A conservative, publicly verifiable lower bound uses only the two components that are unambiguously free after the public data are fixed:
\[
\log_2(1024!) + 128 \approx 8{,}769 + 128 = 8{,}897 \text{ bits}.
\]
If the implementation's public multiplication rule admits multiple triangular realizations, the residual is larger by the corresponding transform entropy; if the rule forces a unique $T$ once $\pi$ is chosen, the residual is exactly the figure above. In either case the residual search space remains far beyond exhaustive search. The precise contribution of $T$ is an implementation-level fact that should be confirmed by inspection of the published basis and structure constants.

\paragraph{Important caveat.}
Even the residual count above is a combinatorial description-count, not a proven computational lower bound on the cost of recovering a correct retraction. After $I$ is known, quotient canonicalization supplies a coordinate system on the quotient ring; in the commutative $\Z_p$ instantiation this may allow the correct embedding and labelling to be read off with far less work than brute-force search over the residual space (see \S\ref{sec:ri} and \S\ref{sec:where_gks_differs}). Patent-protected non-commutative and composite-modulus instantiations are intended to obstruct that route; their concrete post-$I$ security remains to be quantified. The present theorem therefore supplies only a combinatorial necessary condition, not a sufficient security guarantee in the post-ideal-recovery regime.

\section{Security Analysis}
\label{sec:security_analysis}

\subsection{Hardness Assumptions}
The security of the GKS scheme rests on four computational assumptions specific to the algebraic structure of $S_r$. We state each formally. None reduces to a standard lattice or number-theoretic problem; \S\ref{sec:rlwe_comparison} discusses this explicitly.

\subsubsection{Two-phase security model}
\label{sec:two_phase}
We explicitly distinguish two adversarial phases, because the effective hardness differs between them:
\begin{itemize}
    \item \textbf{Phase 1 (Pre-ideal-recovery):} The adversary does not know $I$. Security rests on ZKI (Definition~\ref{def:zki}) and TBI (Definition~\ref{def:tbi}). In the strict IND-EAV model, we do not claim a completed ITS theorem. The full combinatorial keyspace is relevant as a measure of total secret state.
    \item \textbf{Phase 2 (Post-ideal-recovery):} Of the four key components in \S\ref{sec:four_components}, the ideal generators are fixed once $I$ is known, and quotient canonicalization collapses most of the permutation and triangular ambiguity (\S\ref{sec:key_comp_indep}, \S\ref{sec:post_ideal}). In the commutative $\Z_p$ instantiation the residual post-$I$ barrier is the $O(p)$ embedding search of \S\ref{sec:post_ideal} (about 30 bits at $p \approx 2^{30}$). The GKS patents have mitigations for this attack, but their effectiveness has yet to be fully quantified.
\end{itemize}

\subsubsection{Zero-Knowledge of Ideal (ZKI)}
\label{sec:zki}
\noindent\emph{Intuition.} An adversary observing ciphertexts cannot recover the ideal $I$ from the public presentation of $S_r$ in the transformed basis. Given the quotient canonicalization analysis above, ZKI is the primary security assumption of the GKS scheme.

\begin{definition}[ZKI Assumption]
\label{def:zki}
\textbf{Instance:} The public ring presentation of $S_r$ in the transformed basis, together with a polynomial number of ciphertexts encrypting non-zero plaintexts. \textbf{Challenge:} Output a basis for the secret ideal $I$. \textbf{Assumption:} For every $\mathrm{PPT}$ adversary $\mathcal{A}$,
\[
\Pr[\mathcal{A}\text{ outputs a basis for } I] \le \negl(\lambda).
\]
\end{definition}

In the strict single-ciphertext model (no homomorphic evaluation, no differences), Lemma~\ref{lem:single_ct_ambiguity} shows that $I$ cannot be identified from one ciphertext in the SIMD case; this holds without computational assumptions. Empirical checks on equal-weight plaintexts found no distinguisher above the sampling noise floor (Appendix~\ref{app:theorem1}).

\paragraph{Boundary condition.}
When the adversary can form differences $E(m) - E(m)$ (or otherwise obtain elements of $I$), the information-theoretic guarantee no longer applies. ZKI then becomes a computational assumption: the adversary holds samples from $I$ but must still invert the secret basis transformation in order to recover a usable basis. Ideal rotation (\S\ref{sec:ideal_rotation}) and bijective encoding (\S\ref{sec:bijective_encoding}) limit the number of such samples that can be collected under a fixed ideal.

\subsubsection{Retraction Indistinguishability under Decryption (RI-D)}
\label{sec:rid}
\noindent\emph{Intuition.} Even if the adversary observes exact decryption outputs (the $\text{IND-CPA}^D$ setting of Li--Micciancio \cite{LM21}), the specific attack vector that breaks CKKS---noise-key correlation leakage---does not exist in GKS.

\begin{definition}[RI-D Assumption]
\label{def:rid}
\textbf{Instance:} The public ring presentation of $S_r$, the ideal $I$, and oracle access to $\Dec(sk, \cdot)$ returning exact plaintexts. \textbf{Challenge:} Distinguish encryptions of $m_0$ from encryptions of $m_1$ given the decryption oracle. \textbf{Assumption:} For all $\mathrm{PPT}$ adversaries $\mathcal{A}$ with access to $\Dec(sk, \cdot)$,
\[
\Adv^{\mathrm{RI}\text{-}\mathrm{D}}(\mathcal{A}) \le \negl(\lambda).
\]
\end{definition}

\paragraph{Structural argument against LM21-class attacks.}
The Li--Micciancio attack exploits: (1) approximate decryption leaking a noise term, (2) noise correlated with the secret key, and (3) bootstrapping preserving the correlation. GKS has none of these: decryption is exact, the ideal element $j \in I$ is algebraically killed (not approximately removed), and there is no bootstrapping. The LM21 attack vector is structurally absent.

\paragraph{Scope and limitations.}
The structural argument eliminates the specific LM21 attack class. It does not rule out all $\text{CPA}^D$ attacks. A decryption oracle provides exact evaluation of the projection map $\rho$ on adversary-chosen inputs. Residual attack surfaces include algebraic probing to reconstruct $\rho$ incrementally and kernel/image queries to identify ideal elements. These are substantially mitigated by the private-key model (the adversary cannot craft valid ciphertexts without knowing $I$, which presupposes ZKI compromise). We state RI-D as an assumption because a complete $\text{CPA}^D$ analysis remains open. In the GKS private-key model, the practical relevance of RI-D is limited: an adversary requires knowledge of $I$ to craft valid ciphertexts for submission to the decryption oracle, which presupposes ZKI is already compromised. RI-D is therefore primarily relevant as a comparative benchmark against CKKS rather than as an independent operational threat.

\subsubsection{Triangular Basis Indistinguishability (TBI)}
\label{sec:tbi}
\noindent\emph{Intuition.} The secret permutation $\pi$ and triangular transformation $T$ cannot be determined from the published transformed basis before $I$ is known. TBI protects the basis transformation and is the algebraic mechanism underlying ZKI.

\begin{definition}[TBI Assumption]
\label{def:tbi}
\textbf{Instance:} The published basis $\{f_1, \dots, f_{2^r}\}$ of $S_r$, which is the image of a secret permutation $\pi$ of the full set of $2^r$ orthogonal idempotents followed by a nested-support triangular transform. \textbf{Challenge:} Recover $\pi$ (a permutation of $2^r$ elements) or any of the triangular coefficients, or any information that reduces the effective keyspace. \textbf{Assumption:} For all $\mathrm{PPT}$ adversaries $\mathcal{A}$,
\[
\Pr[\mathcal{A}\text{ recovers }\pi\text{ or any } c_{ij}] \le \negl(\lambda).
\]
\end{definition}

\paragraph{Evidence.}
The multiplication table satisfies $f_j f_k = f_k$ for $k \le j$. Attack 2 (\S\ref{sec:attack2}) shows this is consistent with every nested-support triangular transform (at least $2^{512}$ at $r = 10$; Lemma~\ref{lem:triangular}). The permutation adds $1024! \approx 8{,}769$ bits of further unconstrained degrees of freedom.

Standard semisimple algebra decomposition algorithms \cite{BP23, Ebe91, EG00, FR85} canonicalize any explicit presentation of a finite-dimensional semisimple algebra over a finite field by computing its primitive idempotent basis. For a 128-dimensional algebra over $\Z_p$ ($p \approx 2^{30}$), these algorithms are tractable. The critical question is whether these algorithms can be applied before $I$ is known---i.e., whether TBI can be broken directly on the full ring $S_r$ without first recovering $I$.

We argue this is substantially harder than post-$I$-recovery canonicalization. $S_r$ itself is a 1024-dimensional semisimple algebra whose primitive idempotent decomposition is already known in principle (it is $\Z_p^{1024}$). The problem is not decomposing $S_r$ but determining \emph{which 128 of the 1024 primitive components constitute the plaintext ring $S_n$ vs. the ideal $I$}. This is a subset-identification problem, not a decomposition problem. The basis transformation hides which components correspond to which.

Running Eberly-style decomposition on the full $S_r$ in the transformed basis would yield primitive idempotents in the published coordinates---but the adversary cannot determine which subset corresponds to $I$ without additional information. This distinguishes the pre-$I$ setting (subset identification: hard) from the post-$I$ setting (quotient canonicalization: tractable) discussed in \S\ref{sec:post_ideal}.

\paragraph{Objective assessment.}
We cannot prove that no algorithm bypasses TBI without first recovering $I$. We identify two specific targets for the Stage 2 cryptanalytic challenge and Stage 3 audit: (i) direct canonical decomposition of $S_r$ in the transformed basis to identify the ideal's components (which would simultaneously bypass TBI and ZKI), and (ii) optimized canonicalization of the 128-dimensional quotient at production parameters to confirm the concrete cost of post-$I$-recovery canonicalization.

\paragraph{Role in the security argument.}
TBI supports ZKI as the algebraic mechanism that hides the ideal's location within $S_r$. Its security value is pre-$I$-recovery only. Once $I$ is known, TBI's protection is rendered inert at the quotient level via canonicalization, and security falls through to the embedding $\alpha$.

\subsubsection{Retraction Indistinguishability (RI)}
\label{sec:ri}
\noindent\emph{Intuition.} Given the public presentation of $S_r$ and the ideal $I$, an adversary cannot efficiently determine which retraction $\rho: S_r \to S_n$ was used for encryption.

\begin{definition}[RI Assumption]
\label{def:ri}
\textbf{Instance generation:} Run $\KeyGen(\lambda)$ to obtain $sk = (\pi, T, \{w_m\}, \alpha)$ and the public ring presentation. The distribution over retractions is induced by $\KeyGen$. \textbf{Challenge:} The challenger runs $\KeyGen$ twice with the same $I$ but independent $(\pi, T, \alpha)$, producing $\rho_0$ and $\rho_1$. It selects $b \leftarrow_\$ \{0,1\}$ and provides $I$ and oracle access to encryption under $\rho_b$. (In the private-key model, this oracle is unavailable to the adversary; we state it in the stronger form for compositional arguments.) \textbf{Assumption:} For all $\mathrm{PPT}$ adversaries $\mathcal{A}$,
\[
\Adv^{\mathrm{RI}}(\mathcal{A}) = \left|\Pr[b'=b] - \frac{1}{2}\right| \le \negl(\lambda).
\]
\end{definition}

Quotient canonicalization and effective post-$I$-recovery hardness are discussed in \S\ref{sec:post_ideal}.

\subsubsection{Independence of Key Components}
\label{sec:key_comp_indep}
Theorem~\ref{thm:key_ambiguity} multiplies four component counts. The algebraic independence holds (Appendix~\ref{app:theorem2}, Lemma~\ref{lem:independence}): $\pi$ and $T$ operate on basis representation, ideal generators on the extension structure, and $\alpha$ on the plaintext ring $S_n$. These operate on structurally disjoint aspects of the key. Ring constraints and multiplication table consistency do not create hidden correlations (Attack 2, \S\ref{sec:attack2}).

\paragraph{Important caveat.}
Independence of key descriptions does not imply independence of key effects after quotient canonicalization. Many distinct $(\pi, T)$ pairs yield the same quotient-level behavior once the canonical idempotent basis is computed. The independence argument of Lemma~\ref{lem:independence} is valid for counting key descriptions; it does not contradict the observation that most of this count becomes presentation redundancy---gauge freedom that disappears after quotient normalization.

The dependency structure follows the two-phase model (\S\ref{sec:two_phase}): ZKI and TBI operate pre-$I$-recovery; post-$I$-recovery defense in the commutative $\Z_p$ instantiation is limited to the $O(p)$ embedding search of \S\ref{sec:post_ideal}.

\subsubsection{Limitations and Open Problems}
A commentator observed that, under the default fixed triangular basis, the map from the published $f$-basis to canonical coordinates is a public prefix-sum. Consequently the $2^n$ plaintext-bearing coordinates are unmasked. For non-uniform plaintext distributions an eavesdropper can therefore recover the multiset of slot values (up to the secret permutation) from a single ciphertext. Uniform ensembles show no distinguisher above the sampling noise floor, and the intermediate layer under double encryption is high-entropy. We are currently (i) quantifying the uniformity of production and intermediate-layer data and (ii) completing the random-triangular-basis implementation on the outer layer, which removes the public prefix-sum shortcut. Until those measurements are finished we treat the fixed-$T$ single-layer configuration as unsuitable for low-entropy plaintexts and recommend double encryption together with ideal rotation and (where possible) random-$T$. The algebraic channel is closed by any of: random triangular basis on the observed layer, a non-commutative instantiation, or a genuine restriction to uniform plaintexts.

\begin{enumerate}
    \item \textbf{No standard-problem reduction.} None of RI, ZKI, RI-D, or TBI reduces to a standard lattice or number-theoretic assumption. Non-reducibility reflects algebraic orthogonality, not weakness (\S\ref{sec:rlwe_comparison}), \cite{Nao03}.
    \item \textbf{No worst-case/average-case reduction.} No analogue of Regev's reduction exists for GKS. Establishing one for the Retraction Recovery Problem or the ideal-recovery problem is explicitly open.
    \item \textbf{Post-$I$-recovery security in the commutative $\Z_p$ setting is modest.} Once $I$ is known, the ideal generators are fixed, quotient canonicalization collapses most of the permutation and triangular ambiguity, and the residual barrier is the $O(p)$ embedding search of \S\ref{sec:post_ideal} (about 30 bits at $p \approx 2^{30}$); for ring-structured plaintexts there is no residual barrier. Canonicalization has not yet been run at GKS production parameters. Consistent with \S\ref{sec:where_gks_differs}, non-commutativity alone does not make these algorithms inapplicable: \cite{FR85, EG00} decompose non-commutative finite-dimensional algebras over fields in polynomial time. The load-bearing post-$I$ barrier is the $\Z_N$ instantiation, which ties quotient decomposition to integer factorization, so the residual post-$I$ security in that instantiation is that of the factoring instance (\S\ref{sec:post_ideal}).
    \item \textbf{ZKI is the critical boundary.} Given item (3), the scheme's practical security depends primarily on preventing $I$-recovery. A single ciphertext does not identify $I$ (Lemma~\ref{lem:single_ct_ambiguity}, SIMD case), and empirical checks on equal-weight plaintexts found no distinguisher above the sampling noise floor (Appendix~\ref{app:theorem1}). Across multiple ciphertexts, pairwise rank gives a key-free equality test (\S\ref{sec:attack9}), so $I$-recovery is bounded by the supply of same-value differences under one ideal, not by ZKI alone. Under operational conditions where $E(0)$ elements may accumulate, the mitigations of \S\ref{sec:mitigations} (ideal rotation, bijective encoding) are essential.
    \item \textbf{Quantification of security against attack vectors.} A quantification of security against the major attack vectors is currently being field tested. No tool analogous to the Lattice Estimator yet exists for the post-ideal embedding-search problem; this remains open alongside the audit.
    \item \textbf{Evidence hierarchy.} In order of strength: (a) a proved single-ciphertext ideal-ambiguity result (Lemma~\ref{lem:single_ct_ambiguity}, SIMD case) and a conditional single-ciphertext IND-EAV result where indistinguishability holds after conditioning on the publicly observable rank (Appendix~\ref{app:theorem1}), with empirical support for balance (Appendix~\ref{app:stat_tests}); (b) combinatorial counting of key descriptions with explicit distributional analysis (Theorem~\ref{thm:key_ambiguity}, Appendix~\ref{app:theorem2})---valid as a count, not as a post-canonicalization security estimate; (c) structural inapplicability of LM21-class attacks to GKS (RI-D); (d) preliminary cryptanalysis of nine attack strategies (\S\ref{sec:attack_vectors}); (e) absence of known algorithms that bypass TBI without first recovering $I$. We do not equate absence of known attacks with evidence of hardness.
\end{enumerate}

\subsection{Threat Model: Adversary Classes}
As a private-key architecture, there are three primary security pillars:
\begin{itemize}
    \item Plaintext and $\Enc$ stay with the data owner; the evaluator is not given $m$ or an $\Enc$ oracle.
    \item Encryption does not expose a public coordinate correspondence between plaintext slots and ciphertext coordinates.
    \item Matched $(m, c)$ pairs are not provided to the evaluator.
\end{itemize}
The GKS deployment model involves three parties with distinct capabilities:
\begin{itemize}
    \item \textbf{External adversary (eavesdropper):} Observes ciphertexts only. Security: single-ciphertext IND-EAV, conditional on the balance lemma (Theorem~\ref{thm:indeav}) and rank conditioning (Appendix~\ref{app:theorem1}). Across multiple observed ciphertexts, the pairwise-rank equality test of \S\ref{sec:attack9} is available to an eavesdropper as well.
    \item \textbf{Cloud operator (Carl):} Holds the encrypted database and performs allowed homomorphic operations. Has no encryption oracle and is not given plaintexts. Self-subtraction of a single ciphertext yields the zero ring element, not a fresh ideal sample. The residual in-scope risk is multi-ciphertext COA: if about $N_h \approx \dim I$ (or $\dim K$) linearly independent same-value differences are available under one compute ideal, they span the ideal (or composed kernel). Pairwise rank already gives a key-free equality test (Appendix~\ref{app:rank_diff}), so the binding constraint is the number of repeated plaintext values under one ideal. Mitigations are collision scarcity / ZCG (designed; \S\ref{sec:zcg}) at the inner layer, hygiene and multiplicity bounds at the original layer, ideal rotation, SIMD, and the encodings of \S\ref{sec:mitigations}. IND-EAV is the single-ciphertext model; it does not by itself bound that harvest.
\end{itemize}

\begin{table}[h]
\centering
\caption{Participant capabilities}
\label{tab:participant_capabilities}
\begin{tabularx}{\textwidth}{p{3.5cm}ccccX}
\toprule
\textbf{Adversary} & \textbf{Sees CTs} & \textbf{Sees Interm.} & \textbf{Enc Oracle} & \textbf{Rep. Queries} & \textbf{Controls Enforced} \\
\midrule
EAV (passive) & Yes & No & No & No & N/A \\
Carl (database passive) & Yes & No & No & No & Yes \\
Carl (active search) & Yes & Match only & No & Yes & Yes \\
Carl (compiled API) & Yes & Possible & No & Yes & Advisory only \\
Carl (client sovereign) & Yes & No\textsuperscript{*} & No & Yes & Yes, Alice's key \\
\bottomrule
\end{tabularx}
\vspace{1ex}
\raggedright\footnotesize\textsuperscript{*}KPA pairs obtained through side channel or leakage, not through encryption oracle access. No external party holds an encryption oracle in the GKS private-key architecture.
\end{table}

\subsection{Identified Attack Vectors}
\label{sec:attack_vectors}
We consider nine attack strategies and argue intractability at recommended parameters.

\subsubsection{Attack 1: Gr\"obner basis computation}
The retraction constraints yield a system with $\sim 10^5$ variables and $\sim 10^9$ equations over $\Z_p$. While the GKS system has significant algebraic structure (triangular form, idempotency constraints, Boolean lattice of subsets), at $n=7$, dimension 1024, no existing Gr\"obner basis solver can process systems at this scale. At research parameters ($n=3$, dimension 64), Gr\"obner basis computation is feasible and is recommended as a validation exercise (\S\ref{sec:future_work}), \cite{Fau99, Fau02}. Our current Gr\"obner-basis formulation is not practical at $n=7$; we have not identified a practical Gr\"obner attack at the recommended parameters.

\subsubsection{Attack 2: Triangular transform recovery}
\label{sec:attack2}
The public multiplication table $f_j f_k = f_k$ ($k \le j$) is consistent with every nested-support triangular transform. It does not uniquely determine the secret support sets or the coefficients, so the triangular transform remains information-theoretically hidden among an exponentially large set of possibilities.

\subsubsection{Attack 3: Meet-in-the-middle}
Requires enumerating $2^{63{,}242}$ partial keys in each half---far beyond feasibility.

\subsubsection{Attack 4: Lattice embedding}
The ``noise'' $e \in I$ has arbitrary $\Z_p$-coefficients (not bounded norm). The retraction is algebraic, not geometric. No known embedding reduces the GKS problem to a lattice problem.

\subsubsection{Attack 5: Statistical distinguishing}
For a fixed key, encryptions of a given plaintext are uniform over a coset of $I$, and distinct cosets are disjoint, so this is not by itself an indistinguishability argument (Appendix~\ref{app:single_ct_ensembles}). Single-ciphertext indistinguishability rests on the balance lemma (Theorem~\ref{thm:indeav}) and conditioning on rank (Appendix~\ref{app:theorem1}). For multiple ciphertexts, the pairwise-rank test of \S\ref{sec:attack9} is a successful distinguisher of equal versus unequal plaintexts. However, all plaintext collisions will be removed with ZCG (\S\ref{sec:zcg}), which is designed but not yet fully implemented.

\subsubsection{Attack 6: Linear constraint accumulation from KPA pairs}
An adversary might hope that each known pair $(m_i, c_i)$ provides a linear equation on the retraction $\rho$, and that collecting enough pairs yields a solvable linear system determining $\rho$.
\paragraph{Analysis.} This attack has a fatal prerequisite for non-ring data: to form the equation $c_i - \alpha(m_i) = j_i \in I$, the adversary must compute $\alpha(m_i)$, which requires the secret embedding $\alpha$. A single pair, taken alone, gives no linear equation in $\alpha$ or $I$. Collectively, however, labelled pairs do: decryption is an affine map on ciphertext coordinates, so once the pairs span the ciphertext space, the decryption map can be fitted directly without recovering $(\pi, T, \alpha)$ individually. Appendix~\ref{app:affine} confirms this at 513 and 2,049 pairs for the tested parameters. The case $m_i = 0$ gives an ideal element immediately (\S\ref{sec:attack7}). KPA is therefore not claimed. Under the GKS architecture the evaluator holds no labelled pairs, so this attack is out of scope operationally (Table~\ref{tab:security_by_class}).

\subsubsection{Attack 7: Post-ideal canonicalization / direct-readout attack}
\label{sec:attack7}
\begin{enumerate}
    \item Reduce the ciphertext modulo $I$. The result is an element of the quotient ring $S_r/I \cong S_n \cong (\Z_p)^{2^n}$.
    \item Because $\alpha$ is a homomorphism, the plaintext $m$ sits inside that quotient as the scalar multiple $m \cdot u$ of the fixed idempotent $u = \alpha(1)$.
    \item In the orthogonal (canonical) basis of the quotient, the element $m \cdot u$ is simply the constant $m$ on the support of $u$ and 0 elsewhere.
    \item The attacker therefore reads $m$ directly from those coordinates. No search over embeddings or retractions is required.
\end{enumerate}
When plaintexts do have ring structure (the embedding $\alpha$ is a full ring homomorphism, as required for delegated computation), a different and stronger attack becomes available. This attack is contingent on $I$-recovery, which cannot be done from a single ciphertext (Lemma~\ref{lem:single_ct_ambiguity}) and which, across many ciphertexts, the operational mitigations (\S\ref{sec:ideal_rotation}--\S\ref{sec:double_encryption}) are designed to prevent.

For $p \approx 2^{30}$ (the default for non-ring data), this attack is feasible---but it applies only when plaintexts have ring structure. For ring-structured plaintexts, \cite[\S 5.1]{GKS18} recommends that $p$ has bit length on the order of the required security parameter $t$ (e.g., a 128-bit prime when $t = 128$). For ring data, double encryption is recommended.

\paragraph{The unique exception: $m_i = 0$.}
Every ring homomorphism maps zero to zero: $\alpha(0) = 0$ regardless of which $\alpha$ was chosen. Therefore $E(0) = \alpha(0) + j = j \in I$, and the adversary obtains a raw element of $I$ for free. This is the only plaintext whose encoding is independent of the secret embedding. Zero-pattern leakage (\S\ref{sec:zero_pattern_leakage}) is a consequence of this algebraic fact. With ZCG and double encryption, zero never enters the plaintext or intermediate space, eliminating this input exception; homomorphically derived zeros are addressed separately (\S\ref{sec:bijective_encoding}, Remark).

\subsubsection{Attack 8: Non-Zero KPA}
A recurring structural risk in this scheme is the exposure of elements of $I$, since a confirmed member of $I$ is the unit of currency for the recovery attack of \S\ref{sec:ideal_recovery_differencing}. Elements of $I$ arise operationally as zeros---values whose plaintext component vanishes, leaving a pure ideal element. We classify the sources:
\begin{enumerate}
    \item \textbf{Actual zeros (plaintext-level):} A plaintext value of zero, encrypted directly, yields $E(0) = j \in I$. This channel is closed by bijective encoding (\S\ref{sec:bijective_encoding}): the encoding map is applied before encryption so that no plaintext is represented as a bare zero in the encrypted domain.
    \item \textbf{Self-subtractions:} The operation $c - c$ on any single ciphertext yields the zero element of the ring. The zero vector itself has rank 0 and therefore supplies no linearly independent sample that can help span $I$. Bijective encoding is irrelevant to this cancellation. Nevertheless, the capability to evaluate subtraction is dangerous: it allows differences of distinct computed values and is the most direct route by which an evaluator can manufacture ideal elements. The proper mitigation is therefore architectural---exclusion or strict cryptographic control of subtraction (and equivalent ideal-generating operations) in the allowed function class (compiled API or client-sovereign enforcement channel)---together with rate limiting and ideal rotation. Non-exposure of intermediates remains necessary but is not sufficient.
    \item \textbf{Manufactured zeros (computation-forced):} An adversary may drive two computation paths to a value known a priori to be zero---for example, via identity-forced constants such as products with a zero factor---and difference them. Such a difference is a genuine member of $I$, recognized by arithmetic alone without any oracle. Two properties address this:
    \begin{itemize}
        \item Under the match-only interface and the absence of an exposed comparison predicate in the compute channel (\S\ref{sec:compartmentalization}), the result of such an operation is never observable: the adversary may force a zero but cannot read the ideal element it produces.
        \item Under double encryption (\S\ref{sec:double_encryption}), any such difference an adversary could observe is the outer-encrypted image of the inner ideal element rather than the element itself. The protection is non-observability and outer masking.
    \end{itemize}
\end{enumerate}
In all three cases the security objective is identical: prevent the adversary from obtaining an observable, confirmed element of $I$. The defenses converge on ZCG (\S\ref{sec:zcg}) and double encryption.

\subsubsection{Attack 9: Rank Order Attacks}
\label{sec:attack9}
The rank-order attack is multi-ciphertext COA and thus in scope. For outers $c_a, c_b$, set $d = c_a - c_b$. If they encrypt the same original value, $d$ lies in $I$ (one layer) or in the composed kernel $K$ (double encryption), independently of whether the inner masks match. Public $\mathrm{rank}(M_d)$ is a possible membership signal. It does not satisfy $\mathrm{rank}(M_d) = \dim I$ for every ideal element; empirically, on the tested keys, same-$m$ and different-$m$ differences separated (Appendices~\ref{app:kernel_span} and \ref{app:rank_diff}). A single pair's rank already separates the two cases, so pairwise rank can be a key-free equality test from the outset; no prior harvest is needed to recognize equal pairs.

Once enough same-$m$ differences are in hand, they span $I$ or $K$ (Appendix~\ref{app:kernel_span}). After that, membership classifies further pairs. Double encryption does not move the only relevant collision to the intermediate. ZCG removes plaintext and intermediate collisions.

Differences from different repeated values all lie in the same $I$ (or $K$), so they combine. A corpus of $N$ ciphertexts under one ideal containing $D$ distinct plaintext values supplies $N - D$ linearly independent same-value differences, and Carl can locate them by pairwise rank. ZCG, which is designed but not yet fully implemented, will remove all collisions (\S\ref{sec:zcg}).

\begin{table}[h]
\centering
\caption{Passive and Active Rank Order Attacks}
\label{tab:rank_order_attacks}
\begin{tabularx}{\textwidth}{p{3.5cm}p{4cm}X}
\toprule
\textbf{Aspect} & \textbf{Passive} & \textbf{Active} \\
\midrule
Source of collisions & Already present in the static corpus & Still must occur; Carl can only hope that derived ciphertexts create new ones or that he can test more pairs. With double encryption the intermediate layer is high entropy with multiple vectors making creation of a duplicate highly unlikely. \\
\addlinespace
Number of differences he can form & Limited by the given corpus & Limited only by how many operations the policy/rate limits allow under one ideal. \\
\addlinespace
Ability to adapt & None & Can choose which pairs to difference next based on previous rank results. \\
\addlinespace
Ability to inject new plaintexts & None & No. \\
\bottomrule
\end{tabularx}
\end{table}

ZCG (\S\ref{sec:zcg}) would remove all collisions. Carl has no $\Enc$ oracle, so he cannot manufacture that set. Rank order is out of scope as CPA/KPA when the set would have to be created or labelled.

\paragraph{Mitigations.}
Mitigations fall into two categories:
\begin{itemize}
    \item \textbf{Architectural elimination of duplicates:} ZCG eliminates duplicate packed plaintexts, and in double encryption also checks the intermediate layer (\S\ref{sec:zcg}). Deployment controls provide operational assurance, but are not load-bearing for collision elimination.
    \item \textbf{Operational monitoring of Carl:} The Client Custody Architecture (\S\ref{sec:client_custody}) and Client Sovereign Architecture (\S\ref{sec:client_sovereign}) monitor and limit evaluator access.
\end{itemize}

\subsection{Security Games}

\subsubsection{IND-EAV Security Game}
\begin{definition}[IND-EAV Game]
\label{def:indeav}
The IND-EAV game for GKS proceeds as follows:
\begin{enumerate}
    \item Challenger runs $\KeyGen$ to obtain $sk$.
    \item Challenger selects $m_0, m_1 \leftarrow_\$ S_n$ and $b \leftarrow_\$ \{0,1\}$.
    \item Challenger computes $c^* = \Enc(sk, m_b)$ and sends $c^*$ to adversary $\mathcal{A}$.
    \item $\mathcal{A}$ outputs $b' \in \{0,1\}$.
\end{enumerate}
Advantage: $\Adv^{\mathrm{IND}\text{-}\mathrm{EAV}}(\mathcal{A}) = \left|\Pr[b'=b] - \frac{1}{2}\right|$.
\end{definition}

\begin{theorem}[Single-ciphertext IND-EAV --- conditional]
\label{thm:indeav}
If the balance lemma (Appendix~\ref{app:theorem1}) holds, the IND-EAV advantage of any adversary, including a computationally unbounded one, is negligible, and it is negligible if balance holds with negligible statistical distance. The balance lemma is open; empirical checks to date show no deviation (Appendix~\ref{app:theorem1}).
\end{theorem}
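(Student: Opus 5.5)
The plan is a standard hybrid argument. The balance lemma is treated as the hypothesis, and the only work is to turn it into a bound on advantage.

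First, I would make explicit what the balance lemma must say. Consider the distribution of the challenge ciphertext $c^* = \alpha(m) + j$ in the public transformed basis, taken jointly over $\KeyGen$ (that is, $I$, $\pi$, $T$, $\alpha$), the mask $j \leftarrow_\$ I$, and the encryption coins. Conditioned on the publicly observable rank (equivalently, Hamming weight) of $\alpha(m)$, this distribution should be independent of $m$: exactly in the strong form, and up to statistical distance $\varepsilon(\lambda)=\negl(\lambda)$ in the weak form. I would denote these conditional distributions $\mathcal{D}_{m,w}$, where $w$ is the rank class.

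Second, I would apply the standard fact that any adversary, even an unbounded one, distinguishes two distributions with advantage at most their statistical distance. Since $m_0,m_1 \leftarrow_\$ S_n$ are drawn independently from the same distribution, the marginal distribution of $c^*$ is the same mixture whether $b=0$ or $b=1$. The real content is therefore the conditional claim, which is the one the paper actually makes. I would write
\[
\Adv^{\mathrm{IND}\text{-}\mathrm{EAV}}(\mathcal{A}) \le \tfrac12 \sum_{w} \Pr[w]\,\Delta\bigl(\mathcal{D}_{m_0,w},\mathcal{D}_{m_1,w}\bigr) \le \tfrac12\,\varepsilon(\lambda),
\]
using convexity of statistical distance over the rank classes. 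Under exact balance the bound is $0$. Under $\varepsilon$-balance it is negligible, which gives both clauses of Theorem~\ref{thm:indeav}. Lemma~\ref{lem:single_ct_ambiguity} (SIMD case) supplies the supporting fact that one ciphertext does not pin down $I$, so averaging over $I$ is legitimate within each rank class.

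Third, I would check that conditioning on rank is the right normalization. Distinct cosets of $I$ are disjoint for a fixed key (Attack 5), so no fixed-key argument can work. All the hiding must come from averaging over $(\pi,T,I,\alpha)$, and rank is the invariant that survives this averaging.

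The main obstacle is the balance lemma itself, which the statement explicitly leaves open. The difficulty is bounding, within a rank class, how non-uniform the pushforward of $\KeyGen$ is over ciphertext vectors. This is especially hard given the public prefix-sum caveat for fixed $T$: there, slot multisets leak for non-uniform plaintexts. I would first try a random-$T$ configuration and show that the nested-support transforms together with $\pi$ act transitively on equal-rank vectors. That would give exact uniformity, leaving the fixed-$T$ case to the empirical evidence.
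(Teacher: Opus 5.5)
Your reduction is the one the paper uses in Appendices A.1--A.2. Bound the advantage by half the statistical distance between the $\KeyGen$-mixture ensembles, let rank separate unequal-weight pairs, and let balance bound the rest. Your remark that the literal game, where $m_0,m_1$ are never shown to $\mathcal{A}$, gives advantage $0$ with no hypothesis at all is correct; the paper does not notice it. Two smaller points need repair. First, your displayed bound omits the distance $\Delta(W_0,W_1)$ between the rank marginals. Since rank is a function of $c^*$, the weighted sum equals the full distance only when $W_0$ and $W_1$ are identically distributed, and for weight $1$ against weight $2$ the paper measures $\Delta=1$. So you must state the theorem as a game restricted to such pairs, as the paper does; the sum then has a single term. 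Second, the appeal to Lemma~\ref{lem:single_ct_ambiguity} does no work, because averaging over $\KeyGen$ is simply the probability space of the game.

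The substantive error is your third-step claim that rank is \emph{the} invariant surviving the averaging, and with it your plan to prove balance by transitivity on equal-rank vectors. The paper computes rank key-free from the public structure constants; the same reasoning gives the characteristic polynomial of $M_c$, and $\mathrm{rank}(M_c-a\,\mathrm{Id})$ for every $a\in\Z_p$. Because $S_r\cong\Z_p^{2^r}$, the roots of that polynomial are the $2^r$ orthogonal coordinates of $c$. For $c=\alpha(m)+j$ these are the $2^n$ coordinates of $\alpha(m)$, on the coordinates where $I$ vanishes, together with $2^r-2^n$ uniform values. Every change of presentation, random $T$ included, acts on $M_c$ by similarity and so preserves this spectrum. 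Hence no family of transforms can be transitive on equal-rank vectors.

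Concretely, take the SIMD plaintexts all-ones and all-twos. Both have full weight, yet $\mathrm{rank}(M_{c^*}-\mathrm{Id})$ is at most $2^r-2^n$ for the first. For the second it equals $2^r$, except with probability about $(2^r-2^n)/p$. Under $\alpha(m)=m\cdot u$, $m$ appears as an eigenvalue of multiplicity at least $|u|$, while the uniform eigenvalues are almost all simple.

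So the balance lemma as stated (identical ensembles iff equal Hamming weight) is false. Appendix B10 tested rank, which is only the multiplicity of the eigenvalue $0$. Any correct conditional statement must condition on the full spectrum, not on rank. Your implication from balance to small advantage is sound, but its hypothesis is false, so the transitivity route cannot succeed.
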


\begin{lemma}[Single-ciphertext ideal ambiguity]
\label{lem:single_ct_ambiguity}
Suppose SIMD packing uses all $2^n$ slots, so that the encoding is a bijection from slot vectors onto $S_n$. Then for every ciphertext $c$ and every candidate key in the support of $\KeyGen$, $c$ lies in the coset $s + I$ of exactly one $s \in S_n$, and so decrypts to exactly one slot vector. A single ciphertext therefore never rules out a candidate ideal, and no procedure can identify $I$ from $c$ by consistency testing. If the slot vector is uniformly distributed, $c$ is uniform on $S_r$ and statistically independent of the key. The non-SIMD encoding $\alpha(m) = m \cdot u$ is not covered and depends on the balance lemma.
\end{lemma}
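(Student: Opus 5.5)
The plan is to use only the direct-sum decomposition $S_r = \alpha$-image-side complement $\oplus\, I$ that every key in the support of $\KeyGen$ induces, and then read off each claim of Lemma~\ref{lem:single_ct_ambiguity} from it.

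\textbf{Step 1: decomposition.} Fix any candidate key. By the construction in Step~2 of $\KeyGen$, the generators $(x_m - w_m)$ for $m=n+1,\dots,r$ make the substitution $x_m \mapsto w_m$ a ring retraction $\rho: S_r \to S_n$. Its kernel is exactly $I$ and it restricts to the identity on $S_n \subset S_r$. Hence $S_r = S_n \oplus I$ as $\Z_p$-spaces, which is consistent with $\dim I = 2^r - 2^n$. Consequently every $c \in S_r$ lies in exactly one coset $s + I$ with $s = \rho(c) \in S_n$; existence and uniqueness both come from the direct sum.

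\textbf{Step 2: unique decryption, hence no ruled-out key.} Under full SIMD packing the encoding is a bijection $\beta$ from slot vectors $\Z_p^{2^n}$ onto $S_n$. Therefore $\beta^{-1}(\rho(c))$ is a well-defined slot vector, and it is the only one. So for every candidate key, $c$ decrypts to a valid plaintext. The consistency predicate ``$c$ decrypts to a legal message under this key'' is therefore identically true. A single ciphertext can thus never eliminate a candidate ideal, and no consistency-testing procedure can single out $I$. This is exactly the point where the non-SIMD encoding $\alpha(m)=m\cdot u$ fails. Its image is only a one-dimensional subset of $S_n$, so most cosets correspond to no plaintext, and consistency testing becomes meaningful. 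That case is excluded as stated.

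\textbf{Step 3: uniformity and independence.} If the slot vector is uniform, then $s=\beta(\text{slots})$ is uniform on $S_n$. The encryption noise $j$ is uniform on $I$ and independent of $s$. By Step~1 the map $(s,j)\mapsto s+j$ is a bijection $S_n\times I \to S_r$, so $c$ is uniform on $S_r$ conditional on the key. Because this conditional distribution is the same for every key, $c$ is statistically independent of the key. This includes independence from $I$, $\pi$, $T$ and $\alpha$; the public presentation only fixes a coordinate system on $S_r$, and uniformity is basis-independent.

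\textbf{Main obstacle.} The delicate step is Step~1, specifically verifying that $\rho$ fixes $S_n$ and has kernel exactly $I$ for every admissible choice of the $w_m$. I would argue this inductively on $m$. Each quotient $S_m/(x_m-w_m) \cong S_{m-1}$ holds because $w_m$ is an idempotent of $S_{m-1}$, so the relation $x_m^2=x_m$ is preserved. The dimension count $2^r-2^n$ then confirms that the kernel is no larger than $I$.
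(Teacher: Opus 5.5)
Your proposal is correct and follows essentially the same route as the paper's proof. The paper likewise uses the retraction to get $S_r = S_n \oplus I$ for every key in the support of $\KeyGen$, uses full SIMD packing to make decryption single-valued under every candidate key, and obtains uniformity of $c$ from the bijection $(s,j)\mapsto s+j$. You additionally make explicit two points the paper leaves implicit: that $\ker\rho$ is exactly $I$ (via the inductive retractions $S_m \to S_{m-1}$), and that a key-independent conditional distribution, together with basis-independence of the uniform distribution, gives statistical independence from the key.
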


\begin{remark}
Lemma~\ref{lem:single_ct_ambiguity} does not establish that the ciphertext ensembles for two messages are statistically close. Distinct cosets of a fixed ideal are disjoint, so indistinguishability can only hold for the mixture over $\KeyGen$. That is the balance lemma and indistinguishability holds after conditioning on the publicly observable rank (Appendix~\ref{app:theorem1}).
\end{remark}

\paragraph{Application to GKS.}
At $n = 7, r = 10$, $S_r \cong \Z_p^{1024}$ and $\dim I = 896$. Lemma~\ref{lem:single_ct_ambiguity} covers the single challenge ciphertext of Definition~\ref{def:indeav}. Ring structure supplies one public invariant: the rank of the multiplication-by-$c$ map, which equals $\dim(\ker \rho)$ plus the plaintext's Hamming weight (Appendix~\ref{app:single_ct_ensembles}, Appendix~\ref{app:rank_single}). The weight is therefore public. Bijective encoding shifts the base by one but does not remove the dependence. For two or more ciphertexts, the same invariant applied to a difference becomes an equality test (\S\ref{sec:attack9}, Appendix~\ref{app:rank_diff}), so none of this extends to multi-ciphertext access.

\subsubsection{IND-KPA Security Game}
\begin{definition}[IND-KPA Game for Non-Zero Plaintexts]
\label{def:indkpa}
\begin{enumerate}
    \item Challenger runs $\KeyGen$.
    \item $\mathcal{A}$ receives polynomially many pairs $(m_i, E(m_i))$ with all $m_i \ne 0$.
    \item Challenger selects $m_0, m_1 \ne 0$ and $b \leftarrow_\$ \{0,1\}$; sends $c^* = \Enc(sk, m_b)$.
    \item $\mathcal{A}$ outputs $b'$. Advantage: $\Adv^{\mathrm{KPA}}(\mathcal{A}) = \left|\Pr[b'=b] - \frac{1}{2}\right|$.
\end{enumerate}
\end{definition}

\subsubsection{Theorem 2}
\begin{theorem}[Combinatorial Key Ambiguity]
\label{thm:key_ambiguity}
For $n=7, r=10$, the number of private-key descriptions that $\KeyGen$ can produce is at least
\[
1024! \times 2^{512} \times 2^{896} \times 2^{128} \ge 2^{10{,}305},
\]
using the conservative nested-support bound of Lemma~\ref{lem:triangular} for the triangular component. If the full unitriangular family is admissible, the nominal count is $2^{533{,}569}$.
\end{theorem}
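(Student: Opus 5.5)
The plan is to prove Theorem~\ref{thm:key_ambiguity} as a product-rule count over the four key components of \S\ref{sec:four_components}. Fix $n=7$, $r=10$. I will exhibit an injective map from the product set $\Pi\times\mathcal{T}\times\mathcal{W}\times\mathcal{U}$ into the set of private-key descriptions that $\KeyGen$ can output. Here $\Pi$ is the set of permutations of the $2^{10}=1024$ orthogonal idempotents $e_F$ of $S_r$. $\mathcal{T}$ is the set of admissible nested-support triangular transforms. $\mathcal{W}$ is the set of generator tuples $(w_8,w_9,w_{10})$. $\mathcal{U}$ is the set of idempotents $u=\alpha(1)\in S_7$. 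Injectivity is exactly Lemma~\ref{lem:independence}: each component is recorded separately in $sk=(\{w_m\},\pi,T,\alpha)$, and no admissibility constraint couples one component to another. The count of descriptions is then $|\Pi|\,|\mathcal{T}|\,|\mathcal{W}|\,|\mathcal{U}|$, and it remains to bound each factor.

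The factors are handled in increasing order of difficulty.
\begin{enumerate}
\item $|\Pi|=1024!$ by definition. Stirling's formula, $\log_2 N!\ge N\log_2 N - N\log_2 e$, gives $\log_2(1024!)\approx 8{,}769$.
\item $|\mathcal{U}|=2^{2^7}=2^{128}$, because idempotents of $S_7\cong\Z_p^{128}$ are exactly the $0/1$ vectors in the orthogonal basis.
\item For $\mathcal{W}$, each $w_m$ ranges over the idempotents of $S_{m-1}$, giving $2^{2^{m-1}}$ choices. The product over $m=8,9,10$ is $2^{128+256+512}=2^{896}$. This is the quantity the section writes loosely as ``$2^{128}+2^{256}+2^{512}=896$ bits''.
\item For $\mathcal{T}$, I invoke Lemma~\ref{lem:triangular} for the lower bound $|\mathcal{T}|\ge 2^{2^{r-1}}=2^{512}$.
\end{enumerate}
Multiplying gives $1024!\cdot 2^{512}\cdot 2^{896}\cdot 2^{128}$. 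Taking logarithms gives $8{,}769+512+896+128=10{,}305$, which is the claimed bound $\ge 2^{10{,}305}$.

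For the nominal figure, I replace the triangular factor by the number of unrestricted unitriangular $0/1$ matrices on $s=1024$ indices. That number is $2^{s(s-1)/2}=2^{523{,}776}$. The total becomes $8{,}769+523{,}776+896+128=533{,}569$ bits. This clause of the theorem is only conditional (``if the full family is admissible''), so nothing beyond this arithmetic needs proof.

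The main obstacle is step 4, together with injectivity on that component. The argument needs distinct choices of the coefficient vectors $(c_i)$ on the longest chain of nested supports to give distinct descriptions $T$. If the proof of Lemma~\ref{lem:triangular} is not available, I would first build an explicit family. Take the nested chain $F_j=\{1,\dots,j-1\}\cap A$ for a fixed set $A$ of $512$ indices. Let the free bits $c_i$, $i\in A$, be chosen once and shared by all later $f_j$. This respects $F_k\subseteq F_j$ and yields $2^{512}$ distinct transforms, which I would then check are pairwise distinct as matrices. A secondary point is that the final bound is a loose floor for $1024!$. I would verify that $\log_2(1024!)\ge 8{,}769$ holds rather than merely approximately, so that the inequality $\ge 2^{10{,}305}$ is rigorous.
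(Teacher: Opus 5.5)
Your decomposition is the paper's own. The proof in Appendix~\ref{app:theorem2} is exactly this product of the four component counts, joined by the independence lemma, and you read the ideal-generator factor correctly as $2^{128}\cdot 2^{256}\cdot 2^{512}$.

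One step fails as written: the Stirling bound you cite is too weak. At $N=1024$ it gives $N\log_2 N - N\log_2 e = 10{,}240 - 1{,}477.32 \approx 8{,}762.7$, which is about $6.3$ bits short, so it cannot deliver $1024!\ge 2^{8{,}769}$. You need the next term, e.g.\ $N!\ge\sqrt{2\pi N}\,(N/e)^N$. This adds $\tfrac12\log_2(2\pi N)\approx 6.326$ and gives $\log_2(1024!)\ge 8{,}769.006$. The slack over $8{,}769$ is only about $0.006$ bits, so the constant genuinely matters. That slack also absorbs the fact that $u=0$ does not give an injective $\alpha$, so the embedding factor is really $2^{128}-1$ rather than $2^{128}$.

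Your fallback for the triangular factor is valid only under the literal Step~3 definition of $\KeyGen$ (nested supports, coefficients in $\{0,1\}$), and you should say explicitly that this is the family you are counting. It does not satisfy the public rule $f_jf_k=f_k$ ($k\le j$) that \S\ref{sec:gks_construction} says the nested-support condition guarantees. By orthogonality, for $k<j$ one gets $f_jf_k=\epsilon_{jk}\,c_k\,e_{\pi(k)}+\sum_{i\in F_k}c_i e_{\pi(i)}$, where $\epsilon_{jk}=1$ if $k\in F_j$ and $0$ otherwise. So the rule holds iff $F_j=\{1,\dots,j-1\}$ and every $c_i=1$, i.e.\ only the prefix-sum transform survives. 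Since $|A|=512<1023$, some index $k\le 1023$ lies outside $A$, so your family violates the rule for every choice of the $c_i$. If admissibility is meant to include that rule, the triangular factor is $1$ and the count falls to $8{,}769+896+128=9{,}793$ bits. Citing the paper's triangular-count lemma as a black box does not avoid this, because the paper leaves the same ambiguity unresolved. Also take $A\subseteq\{1,\dots,1023\}$, since $c_{1024}$ never occurs in any $f_j$; otherwise you lose a bit.
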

\begin{proof}
See Appendix~\ref{app:theorem2} for the complete derivation, including the independence argument (Lemma~\ref{lem:independence}), the triangular transform count (Lemma~\ref{lem:triangular}), the ideal generator count (Lemma~\ref{lem:ideal_gen}), the embedding count (Lemma~\ref{lem:embedding}), and composition (Theorem~\ref{thm:composition}).
\end{proof}

This bound counts $\KeyGen$ descriptions before ideal recovery. It is not residual ambiguity after $I$ is known: the ideal-generator component then contributes 0 bits, and quotient canonicalization collapses most of the permutation and triangular components (\S\ref{sec:four_components}, \S\ref{sec:key_comp_indep}). The translation of this combinatorial bound into effective post-$I$-recovery security depends on the hardness of the Retraction Recovery Problem (Definition~\ref{def:rrp}, \S\ref{sec:kpa_nonring}).

\begin{remark}[Combinatorial ambiguity vs.\ computational security]
Theorem~\ref{thm:key_ambiguity} establishes a combinatorial lower bound on the number of valid keys. This is a necessary but not sufficient condition for security: it shows that a brute-force enumeration strategy requires searching a space of at least $2^{10{,}305}$ key descriptions. It does not by itself rule out the existence of structured algebraic shortcuts that might narrow this space without full enumeration. The preliminary cryptanalysis in \S\ref{sec:attack_vectors} considers several such shortcut strategies and argues their intractability, but a formal reduction to a recognized hard problem remains open.
\end{remark}

\subsubsection{KPA Non-ring Plaintexts}
\label{sec:kpa_nonring}
Operationally this game is not instantiated. The theorem below is a retraction-ambiguity statement for non-ring plaintexts; it is not a KPA indistinguishability result (with labelled pairs, affine recovery applies; Appendix~\ref{app:affine}).

\begin{theorem}[Retraction Ambiguity --- Non-Ring Plaintexts]
\label{thm:retraction_ambiguity_nonring}
When plaintexts do not have ring structure (the embedding $\alpha$ is one-to-one but not homomorphic---the common enterprise case), for any non-zero ciphertext $c = E(u)$ with $u \ne 0$ and any candidate plaintext $m \ne 0$, there exist valid retractions $\rho$ such that $\rho(c)$ maps to $m$.
\end{theorem}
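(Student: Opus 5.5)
The plan is to construct, for a given non-zero ciphertext $c = E(u)$ and a target $m \ne 0$, an explicit key $(I', \pi', T', \alpha')$ in the support of $\KeyGen$ whose retraction $\rho'$ sends $c$ to $m$. Since $\alpha$ is only required to be injective (not homomorphic) in the non-ring case, the freedom in $\alpha'$ will do most of the work. We keep the public presentation and, where possible, the ideal fixed, so that $\rho'$ is a valid retraction for the same observed data.

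\textbf{Step 1.} Write $\rho = \bar\alpha^{-1}\circ q$, where $q : S_r \to S_r/I \cong S_n$ is the quotient map obtained by substituting $x_j \to w_j$, and $\bar\alpha$ is the encoding into $S_n$.

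\textbf{Step 2.} Compute $s := q(c) \in S_n$. Because $c = \alpha(u) + j$ with $j \in I$, we get $s = \alpha(u)$. This is non-zero, since $u \ne 0$ and $\alpha$ is injective with $\alpha(0)=0$ in the GKS encodings. We will also use that non-ring plaintexts live in a set $M$ whose encoding $\alpha : M \to S_n$ is an arbitrary injection, as in Attack 6 and \cite{GKS18}.

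\textbf{Step 3.} Define $\alpha' = \alpha \circ \tau$, where $\tau$ is the transposition of $M$ swapping $u$ and $m$ (or the identity if $m = u$). Then $\alpha'$ is again an injection into $S_n$, and $\alpha'(m) = \alpha(u) = s$. It follows that $\rho' = \alpha'^{-1}\circ q$ satisfies $\rho'(c) = \alpha'^{-1}(s) = m$.

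\textbf{Step 4.} Verify validity. $\rho'$ is surjective onto the plaintext encoding set and agrees with a retraction on $S_n$ modulo the relabelling; it therefore arises from the key $(I, \pi, T, \alpha')$. That key has the same ideal and the same public presentation, so it is in the support of $\KeyGen$ whenever the admissible embeddings for non-ring data are closed under precomposition with permutations of $M$.

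\textbf{Main obstacle.} The delicate step is Step 4, because it requires pinning down exactly which embeddings $\KeyGen$ admits for non-ring data. If admissible $\alpha$ are restricted to the form $m \mapsto m\cdot e$ for an idempotent $e$, or to a structured family, the transposition trick may leave the family.

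For that case, the first fallback would be to exploit the other key components. Either change $\pi$ and $T$, or change the ideal generators $w_m$, so that $q'(c) = \alpha(m)$. This means showing that for a fixed $c$ there is a key with $c - \alpha(m) \in I'$. The natural route is to count: the cosets $\alpha(m) + I'$ as $I'$ ranges over the $2^{896}$ admissible ideals (Lemma~\ref{lem:ideal_gen}), composed with the $1024!$ relabellings, and argue that they cover $c$. Concretely, one would choose the permutation so that the coordinates of $c$ carrying the plaintext component realize $\alpha(m)$, which relies on $c$ having the non-zero pattern guaranteed by $u \ne 0$ and on the rank conditioning of Appendix~\ref{app:theorem1}.

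I expect this covering argument to need an extra hypothesis, such as $m$ and $u$ having compatible Hamming weight. If it does, I would state the theorem under that condition, consistent with the rank invariant noted after Lemma~\ref{lem:single_ct_ambiguity}.
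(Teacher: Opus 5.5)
Your construction $\alpha' = \alpha\circ\tau$ is correct. It is the explicit form of the paper's one-line argument that a merely injective (non-homomorphic) $\alpha$ can be chosen freely, so $c$ decodes to any nonzero $m$; the closure of the admissible embeddings under relabelling that you isolate in Step~4 is what the paper assumes tacitly, and your version also shows the ambiguity persists with $I$, $\pi$, $T$ held fixed. The fallback is unnecessary: $m \mapsto m\cdot e$ is the ring case excluded by hypothesis, and since $\alpha'$ leaves $q(c)$ unchanged, the public rank invariant imposes no Hamming-weight condition.
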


\begin{definition}[Retraction Recovery Problem]
\label{def:rrp}
Given the public ring presentation of $S_r$ (in the transformed basis), the ideal $I$ (or a basis thereof), and a ciphertext $c \in S_r$, recover the retraction $\rho: S_r \to S_n$ with $\ker(\rho) = I$ that was used for encryption---or equivalently, recover any of the secret key components $(\pi, T, \alpha)$ sufficient to compute $\rho(c)$.
\end{definition}

In the commutative $\Z_p$ instantiation, the Retraction Recovery Problem reduces after $I$-recovery to the $O(p)$ search of \S\ref{sec:post_ideal}; it is stated here for the patent-protected instantiations, where its hardness is open.

\begin{remark}[Bifurcated security profile]
The effective security of GKS depends on whether plaintexts have ring structure:
\begin{itemize}
    \item \textbf{Non-ring plaintexts (enterprise case):} The description count of Theorem~\ref{thm:key_ambiguity} applies (as a count, not a security level); the $p^{-128} \approx 2^{-3840}$ rarity of accidental ideal elements bounds accidental membership only---not recognizability, since \S\ref{sec:attack9} provides a polynomial-time public membership test.
    \item \textbf{Ring-structured plaintexts (delegated computation):} Effective security is $O(p)$, requiring $p \approx 2^{128}$ for 128-bit security. \cite[\S 5.1]{GKS18} explicitly identifies this bifurcation and recommends parameters accordingly.
\end{itemize}
\end{remark}

\subsubsection{IND-CPA Security: Limitation and Context}
The IND-CPA game presumes a capability the GKS deployment model does not grant. In the IND-CPA game, the adversary is granted oracle access to the encryption function. In the GKS private-key architecture, no external party holds that capability: encryption requires the private key, which never leaves the key holder. We state the position plainly rather than claiming structural immunity---were an encryption oracle exposed, repeated queries would enable accumulation of ideal elements and the scheme would not be IND-CPA secure. Consistent with Kerckhoffs's principle, the algorithm is public and only the key is secret; the absence of the oracle is a property of the deployment, not of the algebra.

With bijective encoding (or, under double encryption, at the intermediate layer), zero never enters the plaintext space, closing the input side of the zero-pattern leakage channel; homomorphically derived zeros are bounded by ideal rotation (\S\ref{sec:bijective_encoding}, Remark). Canonicalization is discussed in \S\ref{sec:post_ideal}.

\paragraph{On the $I$-recovery path.}
The identified path to $I$-recovery requires accumulation of elements of $I$, obtainable via encryption oracle queries (as modeled in the IND-CPA game) or homomorphic derivation (Remark~\ref{rem:diff_zeros}).

\begin{remark}[Differences and zeros]
\label{rem:diff_zeros}
The identity $c - c = 0$ on a single ciphertext yields the zero of the ring. It is not a randomized encryption of zero and it contributes nothing to a basis of $I$. The useful identity is the difference of two independent encryptions of the same plaintext:
\[
E_1(m) - E_2(m) = j_1 - j_2 \in I
\]
(or, under double encryption, in the composed kernel). Carl does not need an $\Enc$ oracle to form that difference if those two outers already exist. He also does not need to know $m$. That is the same-value harvest of \S\ref{sec:attack9} / Appendix~\ref{app:kernel_span} (multi-ciphertext COA), not a CPA query of $E(0)$. What he cannot do, without an oracle or a structural same-$m$ set, is manufacture many independent encryptions of one chosen value. Self-subtraction of a lone ciphertext does not replace that.
\end{remark}

\begin{theorem}[IND-CCA2 Impossibility, folklore]
\label{thm:indcca2}
No FHE scheme achieves IND-CCA2.
\end{theorem}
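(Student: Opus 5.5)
The plan is the standard folklore argument: malleability defeats any decryption oracle. In the IND-CCA2 game the adversary submits two messages $m_0 \neq m_1$ and receives a challenge $c^* = \Enc(sk, m_b)$. It may then query $\Dec(sk,\cdot)$ on any ciphertext except $c^*$ itself. An FHE scheme, by definition, provides a public (or evaluator-accessible) $\mathrm{Eval}$ that, for any admissible function $f$, maps $\Enc(m)$ to a ciphertext decrypting to $f(m)$. The adversary therefore builds a ciphertext $c' \neq c^*$ that is homomorphically related to $c^*$, queries $\Dec(sk, c')$, inverts the known relation, and outputs $b$ with probability $1$ (or negligibly below $1$).

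\textbf{Constructing $c'$.} First I fix the function $f$ to be evaluated. The cleanest choice is the identity realized nontrivially, for example $c' = \mathrm{Eval}(f, c^*)$ with $f(x) = x + 0$. Alternatively, and more concretely in the homomorphic-property style the paper uses, take $c' = c^* + c^*$, which decrypts to $2m_b$, or $c' = c^* \cdot c^*$, which decrypts to $m_b^2$. Second, I argue that $c' \neq c^*$ except with negligible probability. If $c'$ happens to equal $c^*$, the adversary switches to another evaluation path, such as $c^* + c^* + c^*$. Any generic FHE that is semantically secure must re-randomize or otherwise change the representation, since a deterministic-fixed-point evaluation for all $f$ would itself leak information. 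Third, I show how the decryption reply determines $b$. Choose $m_0, m_1$ so that $f(m_0) \neq f(m_1)$; for $f(x) = 2x$ this needs characteristic $\neq 2$, and otherwise one uses $f = $ identity. Comparing $\Dec(sk, c')$ with $f(m_0)$ and $f(m_1)$ then reveals $b$, giving advantage essentially $1/2$.

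\textbf{Main obstacle.} The delicate step is ensuring $c' \neq c^*$ in full generality, since the theorem is stated for all FHE schemes. I would handle this by appealing to the generic fact that the adversary can also homomorphically add a fresh encryption of a neutral element. In the public-key setting, the adversary computes $\Enc(pk, 0)$ itself. In a private-key FHE, which is the GKS case, the adversary uses the encryption oracle granted in the CCA2 game, so $c' = c^* + \Enc(0)$ differs from $c^*$ with overwhelming probability by correctness plus randomization of encryption. If the scheme were not randomized, it could not even be IND-CPA, and the claim is then vacuous.

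For GKS specifically, $c' = c^* + c^*$ already differs from $c^*$ unless $c^* = 0$, which bijective encoding excludes. The reply $2m_b$ then determines $b$ because $p$ is odd.
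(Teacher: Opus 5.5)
Your proposal is correct and lands on exactly the paper's argument: form $c' = c^* + \Enc(0)$, using the encryption oracle that the CCA2 game grants in the private-key setting, submit $c'$ to the decryption oracle, and read off $m_b$. You go slightly beyond the paper's one-line proof by justifying $c' \neq c^*$, which the paper only asserts; your $f(x)=2x$ variant already settles this by correctness alone, since for nonzero $m_0, m_1$ in odd characteristic $c' = c^*$ would force $\Dec(c') = m_b \neq 2m_b$.
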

\begin{proof}
Adversary computes $c' = c + \Enc(0) \ne c$, submits to decryption oracle, and recovers $m_b$. Applies to every FHE scheme \cite{Gen09}.
\end{proof}

\subsubsection{Formal KPA-Zero Security Game}
\begin{definition}[IND-KPA-Zero Game]
\label{def:indkpa0}
\begin{enumerate}
    \item Challenger runs $\KeyGen$.
    \item $\mathcal{A}$ receives polynomially many pairs $(m_i, E(m_i))$ where some $m_i$ may equal 0.
    \item Challenger selects $b \leftarrow_\$ \{0,1\}$ and a fresh ciphertext $c^* = \Enc(sk, m_b)$ where $m_0 = 0$ and $m_1 \ne 0$.
    \item $\mathcal{A}$ outputs $b'$. Advantage: $\Adv^{\mathrm{KPA}\text{-}0}(\mathcal{A}) = \left|\Pr[b'=b] - \frac{1}{2}\right|$.
\end{enumerate}
\end{definition}

\paragraph{Observation.}
If $\mathcal{A}$ has accumulated $\ge \dim(I) = 896$ encryptions of zero at $n=7$, $\mathcal{A}$ can reconstruct $I$ via linear algebra and test $c^* \in I$, winning with advantage $1/2$. This is the zero-pattern leakage channel. The mitigations below bound $\mathcal{A}$'s effective advantage.

\subsubsection{Definition}
\label{sec:zero_pattern_leakage}
\begin{definition}[Zero-Pattern Leakage]
An adversary who has recovered $I$ can test $c \in I$, revealing whether $c$ encrypts zero.
\end{definition}
Zero-pattern leakage is one of the residual channels; the principal in-scope channel is the same-value harvest of \S\ref{sec:attack9}. Zero-pattern leakage alone does not reveal non-zero plaintext values. Recovery of $I$ does, however, enable decryption in the commutative $\Z_p$ instantiation via quotient canonicalization (\S\ref{sec:attack7}, \S\ref{sec:post_ideal}), so the relevant bound is on $I$-recovery itself.

\subsubsection{Concrete Bound on Carl's $I$-Recovery Path}
The cloud operator Carl holds $N$ ciphertexts encrypted under ideal $I$ for a given session. Carl has no encryption oracle and no plaintext knowledge. His sole path to accumulating elements of $I$ is collision subtraction: computing $E(a_i) - E(a_j)$ for pairs of ciphertexts that happen to encrypt the same plaintext $a_i = a_j$. When $a_i = a_j$, the result is $j_1 - j_2 \in I$, a nontrivial ideal element. When $a_i \ne a_j$, the result is $\alpha(a_i - a_j) + (j_1 - j_2)$, an element of $S_r$ that is almost surely not in $I$.

Three barriers to $I$-recovery were originally identified; only the first two hold:
\begin{itemize}
    \item \textbf{Barrier 1 (Limited harvest rate):} Carl can perform at most $Q = N(N-1)/2$ pairwise subtractions per session. The expected number yielding ideal elements depends on the plaintext collision rate $p_{\mathrm{coll}}$. For $N$ ciphertexts drawn from $M$ distinct plaintext values, $\mathbb{E}[\text{collisions}] \approx N^2 / (2M)$. SIMD reduces the pure-collision term by the full-vector factor $(2^{-H})^{2^n}$, which makes whole-vector repeats rare (\S\ref{sec:simd}).
    \item \textbf{Barrier 2 (Ideal rotation):} Session-fresh ideals (\S\ref{sec:ideal_rotation}) reset Carl's accumulated samples to zero at each rotation. Samples from session $k$ are elements of $I_k$ and provide zero information about $I_{k+1}$. Carl must harvest $\dim I$ linearly independent ideal elements within a single session (896 at $n = 7, r = 10$). Under double encryption the target is the composed kernel $K$ ($\dim K = 192$ at 6,7,8; Appendix~\ref{app:kernel_peel}).
    \item \textbf{Barrier 3 (Noisy subspace identification --- ZKI):} Even if Carl's subtraction results include ideal elements, he cannot distinguish them from non-ideal results without knowing $I$. The probability that a uniformly random element of $S_r$ lies in $I$ is $p^{-128} \approx 2^{-3840}$, so noise elements are almost surely outside $I$; distinguishing them, however, does not require a separate membership oracle. As shown in \S\ref{sec:attack9} and Appendix~\ref{app:rank_diff}, a passive evaluator can build an $I$-membership test from public ring multiplication alone by computing the dimension of the principal ideal generated by a candidate element. Barrier 3 as originally stated therefore does not hold; the bound of Proposition~\ref{prop:carl_harvest} instead rests on Barriers 1--2 and on the larger harvest target $\dim K$ under double encryption (\S\ref{sec:double_encryption}).
\end{itemize}

\begin{proposition}[Carl's $I$-Recovery Harvest Bound]
\label{prop:carl_harvest}
Let Carl hold $N$ ciphertexts per session under ideal $I$, with ideal rotation period $L$. Let $p_{\mathrm{coll}}$ denote the probability that a uniformly random pair of ciphertexts encrypts the same plaintext.
\begin{enumerate}
    \item[(a)] Expected ideal elements per session: $\mathbb{E}[|\{r_k \in I\}|] = p_{\mathrm{coll}} \cdot N(N-1)/2$.
    \item[(b)] Concentration (heuristic): Let $X$ be the number of pure ideal elements obtainable by collision subtraction in one session. If $\mathbb{E}[X] \le \dim I / 2$, then by a standard Chernoff (or Markov) bound,
    \[
    \Pr[X \ge \dim I] \le 2^{-\Omega(\dim I)} \quad (\text{or } \le 1/2 \text{ under Markov}).
    \]
    The pairwise collision indicators are not independent, so a Chernoff bound is a heuristic, and only the Markov form is rigorous. Rotation parameters should therefore be set from the realized repetition in the data ($N - D$; \S\ref{sec:attack9}).
    \item[(c)] Necessary condition for $I$-recovery (information-theoretic): If the realized number of linearly independent same-value differences under one ideal, $N - D$, is below $\dim I$ (896 at $n = 7$), Carl cannot span $I$ by collision subtraction, even with a membership oracle.
    \item[(d)] Sufficient condition for security (computational): Under ZKI, and with ideal rotation keeping $N - D$ below $\dim I$ in every session with high probability, Carl's probability of accumulating a basis for $I$ by collision subtraction remains negligible.
\end{enumerate}
\end{proposition}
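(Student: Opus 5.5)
We prove the four parts in order, treating the session's plaintext labels $a_1,\dots,a_N$ as fixed after sampling and the masks $j_1,\dots,j_N$ as independent uniform elements of $I$. For each unordered pair $\{i,k\}$ define the indicator $Y_{ik}=\mathbf{1}[a_i=a_k]$. By the encryption rule, $E(a_i)-E(a_k)=\alpha(a_i-a_k)+(j_i-j_k)$. This lies in $I$ exactly when $\alpha(a_i-a_k)\in I$. Since $\alpha$ is injective into $S_n$ and $S_n\cap I=0$ (because $S_r=S_n\oplus I$ as $\Z_p$-spaces; the decryption map is a retraction with kernel $I$), membership holds iff $a_i=a_k$. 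So the number of ideal elements among the $Q=N(N-1)/2$ pairwise differences is $X=\sum Y_{ik}$. Part (a) then follows by linearity of expectation, using $\Pr[Y_{ik}=1]=p_{\mathrm{coll}}$ as defined in the statement. No independence is needed here.

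For part (b), Markov's inequality gives $\Pr[X\ge\dim I]\le \mathbb{E}[X]/\dim I\le 1/2$. This is the only rigorous claim. The Chernoff form would require independence or negative association of the $Y_{ik}$, which fails: collisions are transitive, so a value repeated $t$ times produces $\binom t2$ correlated indicators. The proof will state this explicitly and point out that the right quantity is not $X$ but the rank of the harvested span.

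That observation is the core of part (c). Group the indices by plaintext value into classes $C_1,\dots,C_D$ with sizes $t_1,\dots,t_D$, where $\sum t_\ell=N$. Within a class, every difference $j_i-j_k$ lies in the span of $\{j_i-j_{i_0}: i\in C_\ell\}$ for a fixed representative $i_0\in C_\ell$. That span has dimension at most $t_\ell-1$. Differences across classes are not in $I$ (as shown above), and subtracting such a difference from another difference with the same plaintext offset can only return us to within-class differences. More generally, any $\Z$-linear combination of ciphertexts whose $\alpha$-part cancels must have coefficients summing to zero on each class. Hence every ideal element Carl can reach by the allowed additive operations lies in a space of dimension at most $\sum_\ell(t_\ell-1)=N-D$. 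If $N-D<\dim I$, this space is a proper subspace of $I$, so no procedure (even one with a perfect membership oracle) can output a basis of $I$ from these elements.

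\textbf{Main obstacle.} Making ``reachable by collision subtraction'' precise is the hard step. Multiplication is available homomorphically, and products of ciphertexts could create further ideal elements. I would restrict the statement, as the proposition does, to linear combinations (collision subtraction) and note multiplicative derivation as outside (c); compare the manufactured zeros of Attack 8.

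Part (d) then follows from (c) by a union bound over the sessions. Within a rotation period $L$, samples from $I_k$ carry no information about $I_{k+1}$ (Barrier 2), so spanning $I_k$ must happen inside one session. By (c), this requires the event $N-D\ge\dim I$ in that session, which by hypothesis has negligible probability. The ZKI assumption (Definition~\ref{def:zki}) covers any remaining non-harvest route. The result is negligible success probability, computational only through its dependence on ZKI.
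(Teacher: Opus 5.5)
Your treatment of (a), (b), (d) and the class-by-class count in (c) is correct and follows the reasoning the paper relies on. The paper gives no separate proof: it rests the proposition on Barriers 1--2 and on the assertion in \S\ref{sec:attack9} that $N$ ciphertexts with $D$ distinct values supply $N-D$ independent same-value differences. Your decomposition into value classes with $\dim\mathrm{span}\{j_i-j_k : i,k\in C_\ell\}\le t_\ell-1$ is exactly the justification that assertion leaves implicit. Invoking $S_n\cap I=0$ also makes ``distinct values give a non-ideal difference'' exact rather than the paper's ``almost surely''. In (d), say explicitly that ``with high probability'' has to mean ``except with negligible probability'' and that there are polynomially many sessions. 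Otherwise the union bound does not give a negligible conclusion.

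Three connected statements fail: the ``more generally'' sentence in (c), the sentence before it about subtracting two differences with the same plaintext offset, and your obstacle paragraph's identification of linear combinations with collision subtraction.

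A combination $\sum_i\lambda_i c_i$ lies in $I$ iff $\sum_\ell\bigl(\sum_{i\in C_\ell}\lambda_i\bigr)\alpha(v_\ell)=0$. This forces each class sum to vanish only if $\alpha(v_1),\dots,\alpha(v_D)$ are linearly independent over $\Z_p$, and nothing guarantees that. Two counterexamples:
\begin{enumerate}
\item Take additive $\alpha$ and four distinct values with $a_1-a_2=a_3-a_4$. Then $c_1-c_2-c_3+c_4=j_1-j_2-j_3+j_4$ is a nonzero element of $I$, although all four classes are singletons ($N-D=0$).
\item For $\alpha(m)=m\cdot u$ all images are multiples of $u$. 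Then $\dim(\mathrm{span}\{c_i\}\cap I)$ is generically $\min(N-1,\dim I)$ with no repeated values at all.
\end{enumerate}
The four-term elements can be found by enumerating quadruples and applying the public rank test of Attack~9, so this is not merely an existence issue.

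Hence $N-D$ does not bound what additions alone can reach, even before multiplication enters. Statement (c) holds only because it is phrased for differences of same-value pairs. Delete the generalization, or add the independence hypothesis on the $\alpha(v_\ell)$. Then keep (c) to same-value differences, which is all the proposition asserts and all your (d) uses.
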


Additionally, Carl cannot match known plaintexts to ciphertexts in the encrypted database: randomized encryption produces different ciphertexts for the same plaintext, and there is no public slot-to-coordinate map.

\subsubsection{Ideal Recovery via Equivalent-Ciphertext Differencing}
\label{sec:ideal_recovery_differencing}
\paragraph{Attack.}
Let $c_1 = E(m)$ and $c_2 = E(m)$ be two encryptions of a common plaintext $m$ with independent randomness. Both lie in the same coset of $I$, so their difference $c_1 - c_2$ is a member of $I$, and it is non-trivial whenever the two randomizations differ. An adversary who can (i) cause the search procedure to evaluate such differences and (ii) learn which differences correspond to true equalities thereby obtains confirmed, labeled elements of $I$. Since $\dim(I) = 896$ at $n=7, r=10$, collecting on the order of $\dim(I)$ independent confirmed elements and taking their span recovers $I$ by linear algebra over $\Z_p$. The reconstruction is polynomial-time and uses only the solution of a linear system; it does not invoke the structure-computation algorithms of \cite{Ebe91, EG00, FR85}. Recovery of $I$ places the scheme in the post-$I$-recovery regime of \S\ref{sec:post_ideal}.

\paragraph{Relation to historical failure modes.}
This is the same linear-algebraic recovery that broke the prior symmetric non-lattice schemes surveyed in \S\ref{sec:prior_schemes}. The distinction for GKS is that the linear attack does not run directly on the secret components $(\pi, T, \alpha)$; it runs on $I$, and only given confirmed equalities.

\paragraph{Defenses.}
The scheme denies the adversary labeled equalities through three layers:
\begin{enumerate}
    \item \textbf{Match-only interface:} The implementation evaluates the aggregate product of per-term differences and returns only match/no-match; a true match yields a product of zero, and no individual difference is exposed.
    \item \textbf{Integrity-protected execution:} Runtime integrity verification (self-checksumming, refusal to execute when attached/patched) raises the cost of extracting intermediates.
    \item \textbf{Double encryption (\S\ref{sec:double_encryption}):} Any difference an adversary could observe is an outer-encrypted image of the inner ideal element, recoverable only with the outer key.
\end{enumerate}

\paragraph{Residual.}
Defenses (1) and (2) withhold equalities from the search interface, but they do not remove the pairwise-rank equality test of \S\ref{sec:attack9}. Against that test the binding control is limiting repeated values under one ideal (\S\ref{sec:attack9}, Proposition~\ref{prop:carl_harvest}). Defense (3) raises the harvest target from $\dim I$ to $\dim K$ but does not remove original-value equalities (\S\ref{sec:double_encryption}).

\subsection{Mitigations}
\label{sec:mitigations}

\subsubsection{Mitigation 1: Ideal Rotation}
\label{sec:ideal_rotation}
For each query session (or at a chosen rotation interval), sample a fresh ideal $I'$ independently of all previous ideals \cite[\S 8.1, Remark 4]{GKS18}. The new ideal is statistically independent of its predecessors; samples collected under an old ideal give no information about $I'$.

\subsubsection{Mitigation 2: Bijective Encoding}
\label{sec:bijective_encoding}
Map plaintext 0 to a nonzero ring element, eliminating $E(0)$ as a distinct class.

\begin{theorem}[Soundness of Bijective Encoding]
\label{thm:bijective_soundness}
Let $e_1$ be any fixed nonzero idempotent of $S_n$ where $e_1 = 1 - u$ and $u = \alpha(1)$, and define the encoding $\phi: \Z_p \to S_n$ by $\phi(m) = \alpha(m) + e_1$, where $\alpha$ is the homomorphic embedding. Then:
\begin{enumerate}
    \item[(i)] $\phi$ is injective;
    \item[(ii)] $0 \notin \mathrm{im}(\phi)$;
    \item[(iii)] $\Enc(\phi(m)) = \phi(m) + j$ for $j \leftarrow_\$ I$ remains a valid encryption;
    \item[(iv)] the additive homomorphism is preserved up to a known shift:
    \[
    \Enc(\phi(a)) + \Enc(\phi(b)) = \Enc(\phi(a+b) + e_1) \text{ \cite{GKS18}}.
    \]
\end{enumerate}
\end{theorem}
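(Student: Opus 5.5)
The plan is to make the embedding explicit and reduce all four claims to the orthogonal decomposition $S_n = uS_n \oplus (1-u)S_n$. By Step~4 of $\KeyGen$, $\alpha(m) = m\cdot u$ with $u = \alpha(1)$ an idempotent of $S_n$. Under bijective encoding the two trivial idempotents are excluded (\S\ref{sec:four_components}, Component~4), so I will record $u \ne 0$ and $u \ne 1$. The second condition is exactly the hypothesis that $e_1 = 1-u$ is nonzero. I will use the identities $u^2 = u$, $e_1^2 = e_1$, $ue_1 = 0$ and $u + e_1 = 1$ throughout; each follows in one line from idempotency of $u$.

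For (i), suppose $\phi(a) = \phi(b)$. Then $(a-b)u = 0$. In the orthogonal idempotent basis of $S_n$, the element $u$ is a sum of distinct $e_F$, so it has some coordinate equal to $1$. That coordinate of $(a-b)u$ is $a-b \in \Z_p$, hence $a = b$. For (ii), suppose $mu + e_1 = 0$ and multiply by $e_1$. This gives $m\,ue_1 + e_1^2 = e_1 = 0$, contradicting the hypothesis that $e_1$ is nonzero.

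For (iii), I will exhibit the decoding map explicitly. Take $c = \phi(m) + j$ with $j \in I$. The existing $\Dec$ steps (change to the standard basis, then substitute $x_k \to w_k$) reduce $c$ modulo $I$ and return $\phi(m) \in S_n$, since $j$ is killed. I then multiply by $u$: $\phi(m)u = mu^2 + e_1u = mu = \alpha(m)$. Finally I apply $\alpha^{-1}$, which is well defined on $\Z_p u$ by (i). So the decoder is $s \mapsto \alpha^{-1}(su)$, and it is exact on every ciphertext in the image. The ciphertext distribution is unchanged in form: it is still uniform on the coset $\phi(m) + I$.

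For (iv), compute $\phi(a) + \phi(b) = (a+b)u + 2e_1 = \phi(a+b) + e_1$. Adding two fresh encryptions then gives $\phi(a+b) + e_1 + (j_1 + j_2)$. Here $j_1 + j_2$ is uniform on $I$, because it is the sum of independent uniform elements of the additive group $I$. So the sum is distributed exactly as $\Enc(\phi(a+b) + e_1)$, with $\Enc$ applied to a ring element as in (iii). The shift $e_1$ is removable by the key holder, and decoding via multiplication by $u$ ignores it anyway.

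As a side check I would note that $\phi(a)\phi(b) = abu + e_1 = \phi(ab)$, so multiplication needs no correction. The main obstacle is not algebraic but definitional. One must make precise that $\Enc$ is being applied to arbitrary elements of $S_n$ rather than to $\Z_p$ values, and that ``valid'' means both exact decodability and the same coset-uniform distribution. Nonzeroness of $u$ must also be taken from the exclusion of trivial idempotents rather than from the displayed hypothesis. I would state these conventions explicitly before the computations.
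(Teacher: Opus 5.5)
Your proposal is correct and follows essentially the same route as the paper's proof. Injectivity comes from injectivity of $\alpha$, which, as you note, needs $u \neq 0$ from the exclusion of trivial idempotents. The nonvanishing of $\phi(m)$ holds because $\phi(m)$ agrees with $e_1$ on the support of $e_1$, and your multiplication by $e_1$ is the same observation. The sum uses the same direct computation $\phi(a)+\phi(b)=\phi(a+b)+e_1$, and you include the same multiplicative check $\phi(a)\phi(b)=\phi(ab)$. Beyond that, your explicit decoder $s \mapsto \alpha^{-1}(su)$ for (iii) and your observation that $j_1+j_2$ is uniform on $I$ for (iv) fill in points the paper's proof leaves implicit.
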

\begin{proof}
See Appendix~\ref{app:bijective_proof}.
\end{proof}

\begin{remark}[Scope of bijective encoding]
Bijective encoding and ideal rotation address different attack channels. Bijective encoding prevents zero from entering the plaintext space (input channel): no KPA-zero pairs can arise. Ideal rotation bounds accumulation of homomorphically derived $E(0)$ elements (computation channel). Both mitigations are needed to eliminate the zero-pattern leakage channel entirely. Bijective encoding removes input zeros ($E(0) = j \in I$). It does not stop collision-subtraction of two independent encryptions of the same encoded value:
\[
E_1(\phi(a)) - E_2(\phi(a)) = j_1 - j_2 \in I.
\]
Self-subtraction of one ciphertext is 0 and is useless. Ideal rotation and multiplicity bounds limit how many useful differences can accumulate under one ideal.
\end{remark}

\subsubsection{Mitigation 3: SIMD Security Amplification}
\label{sec:simd}
SIMD packing has two distinct effects on the rank channel:
\begin{enumerate}
    \item Obtaining a pure ideal element (rank exactly $\dim I$) requires simultaneous equality across all $2^n$ slots of the packed vector. The probability of a usable pure sample scales as $(2^{-H})^{2^n}$. At the reference point $n=5, H=15$ bits, this yields a full-vector collision probability of $2^{-480}$, so that a ten-million-record corpus is expected to contain roughly $2^{-433}$ pure ideal samples. This exponential reduction in the pure-membership harvest rate is the main contribution of SIMD against computed-zero aggregation and the rank-order harvest (\S\ref{sec:attack9}).
    \item One might expect intermediate rank values to reveal the Hamming distance between two slot vectors via $\mathrm{rank}(c_1 - c_2) = \dim I + w(\text{support of slot-wise difference})$. Empirically this does not occur in simple tests (Appendix~\ref{app:distance_staircase}). The observable is therefore an equality test on the full packed vector, not a distance oracle.
    \item Pairwise rank of a difference is an equality test on the full packed vector only in the special case that the two vectors differ in at least one slot and the ideal coordinates are fully masked. In general (and in every corrected probe that used orthogonal-support plaintexts) the rank rises with the Hamming weight of the difference, supplying a weight/distance oracle on the packed vector. Under double encryption the measurement is taken on a high-entropy intermediate; the resulting sample does not map back to a useful constraint on the original low-entropy slots. Ideal rotation further limits the number of pairs observed under one ideal. See Appendix~\ref{app:rank_single}, \ref{app:distance_staircase}, and \ref{app:keygen_ensemble}.
\end{enumerate}

\subsubsection{Mitigation 4: Channel Compartmentalization --- Independent Ideals for Search \& Compute}
\label{sec:compartmentalization}
The scheme separates encrypted search and homomorphic computation into distinct cryptographic channels, each with its own independently generated ideal and ciphertext domain. $I_{\mathrm{search}}$ and $I_{\mathrm{compute}}$ are sampled independently: recovery of one yields no information about the other.

\paragraph{Rationale.}
The search channel is the more exposed: its match predicate supplies the labeled equalities on which ideal recovery depends, whereas the compute channel supplies none through its interface (\S\ref{sec:attack8}). Independent ideals prevent search-side leakage from compromising compute ciphertexts \cite{CGKO06, CK10}.

\paragraph{Consequences for implementation.}
\begin{itemize}
    \item Parameters are tuned per channel: search runs at the heavier parameter set, compute runs at the lighter set.
    \item Ciphertexts must not cross channels: enforced via type-level domain separation.
\end{itemize}

\subsubsection{Mitigation 5: Double Encryption}
\label{sec:double_encryption}
The scheme supports a double-encryption mode where the ciphertext produced by the first (inner) encryption is itself encrypted under a second (outer) instance with an independent ideal:
\[
C = \Phi(c) + j_{\mathrm{out}}, \quad \Phi: S_{r_{\mathrm{in}}} \to S_{r_{\mathrm{out}}}.
\]
Double encryption is the recommended default option for GKS.

\paragraph{Effect on security.}
Under double encryption, any difference an adversary observes is the outer-encrypted image of the inner ideal element, not the inner element itself. Same-original-value differences lie in the composed kernel $K$ ($\dim K > \dim I$) and are recognizable by rank (\S\ref{sec:attack9}, Appendix~\ref{app:rank_diff}). Double encryption raises the harvest target from $\dim I$ to $\dim K$ and removes a clean outer-ideal harvest (Appendix~\ref{app:kernel_peel}).

\paragraph{Effect on correctness.}
After inner encryption, data presented to the outer layer carries no semantic structure and is close to uniform. The probability of accidental collisions between distinct plaintexts is minute, lowering search false positives.

\paragraph{Effect on passive KPA.}
Two encryptions of the same known plaintext carry independent inner randomness, so their outer difference is in $K \supset I_{\mathrm{out}}$, not a clean element of $I_{\mathrm{out}}$ (Observation 1).

\begin{table}[h]
\centering
\caption{Benefits of Double Encryption}
\label{tab:double_encryption_benefits}
\begin{tabularx}{\textwidth}{p{3.8cm}p{3cm}p{3.2cm}X}
\toprule
\textbf{Capability} & \textbf{Single Encryption} & \textbf{Double Encryption} & \textbf{Practical Effect of DE} \\
\midrule
Effective full basis for protecting ideal & Yes ($\mathrm{span\_dim} = \dim I$) & No ($\mathrm{span\_dim} > \dim I$) & Main attack path blocked \\
\addlinespace
Ability to protect/canonicalize outer quotient & Yes & No (wrong ideal) & Decryption path potentially blocked \\
\addlinespace
Key-free equality test & Yes & Still sometimes yes & Intermediate equality harvest closed with ZCG; original $m$ harvest requires data hygiene \\
\addlinespace
Ability to peel the two layers & N/A & No (\texttt{big\_kernel}) & Cannot reduce DE to two independent breaks \\
\addlinespace
Sample complexity to reach weak state & Low ($\sim 1.05$--$1.18 \times \dim I$) & High (holds at counts tested) & Attack becomes more expensive and less conclusive \\
\bottomrule
\end{tabularx}
\end{table}

\paragraph{Masking under Double Encryption: Observations.}
\begin{itemize}
    \item \textbf{Observation 1 (generic inner operands do not leak $I_{\mathrm{out}}$):} Let $D = E_{\mathrm{out}}(c_1^{\mathrm{in}}) - E_{\mathrm{out}}(c_2^{\mathrm{in}})$. The rank satisfies:
    \[
    \mathrm{rank}(D) = \dim I_{\mathrm{out}} + w(\alpha_{\mathrm{out}}(c_1^{\mathrm{in}}) - \alpha_{\mathrm{out}}(c_2^{\mathrm{in}})).
    \]
    The outer rank test has three levels: $\dim I_{\mathrm{out}}$ for identical inner ciphertexts, $\dim K$ for distinct inner encryptions of the same original plaintext, and full rank otherwise.
    \item \textbf{Observation 2 (attacker-controlled equal inner operands relocate attack to $I_{\mathrm{out}}$):} When $c_1 = c_2$ exactly, the outer difference reduces to $j_{\mathrm{out},1} - j_{\mathrm{out},2} \in I_{\mathrm{out}}$. Double encryption relocates the attack from $I_{\mathrm{in}}$ to $I_{\mathrm{out}}$, with harvest threshold $\dim(I_{\mathrm{out}})$. Outer-ideal rotation closes this relocated channel.
\end{itemize}

\paragraph{Query Privacy under Double Encryption (Multi-Party Search).}
The searcher encrypts the query under an outer key held by the searcher alone; the data owner performs the search over a query it cannot decrypt, and returns a result the searcher decrypts. The double-encryption guarantee holds against non-colluding parties.

\subsubsection{Mitigation 6: $\Z_N$ Instantiation and Non-Commutative Rings}
These two mitigations address post-ideal recovery canonicalizations (\S\ref{sec:attack7}) and are covered by GKS patents (US 10,396,976 and US 9,942,031; \S\ref{sec:post_ideal}).

\subsubsection{Mitigation 7: ZCG: Collision Elimination}
\label{sec:zcg}
ZCG (Zero Collision Guarantee) combines:
\begin{itemize}
    \item \textbf{Data Hygiene:} Separate short-term ideals for headers/labels.
    \item \textbf{SIMD:} Packing records by rows rather than columns to minimize duplicates.
    \item \textbf{Nonce:} Reserving one or more SIMD slots for an epoch-unique random nonce (drawn from $[1, q)$).
    \item \textbf{Micro-adjustment:} Adjusting duplicate values at the least significant digits.
    \item \textbf{Duplicity Test:} Client-side hash checks before encryption; re-encoding with a new nonce if a collision occurs.
\end{itemize}
With double encryption, ZCG is applied at both the original-plaintext and intermediate layers, removing the collision prerequisite for rank-order and differencing attacks.

\section{Deployment Options}
\label{sec:deployment_options}
The GKS scheme supports four deployment configurations:

\subsection{Compiled API Mode}
Distributed as an integrity-protected compiled library. Rank evaluation (\S\ref{sec:attack9}) is absent from the API surface. Operational controls (subtraction rate limits, key expiry) are advisory in this tier.

\subsection{Client Custody Architecture}
\label{sec:client_custody}
Data owner encrypts and retains ciphertext. Carl submits evaluation requests; computation executes inside the data-owner-controlled environment. Monitoring includes clock controls, counter controls, timer controls, and type controls.

\begin{figure}[h]
\centering
\begin{tikzpicture}[
    box/.style={draw=teal!80!black, rounded corners=3pt, line width=1pt, align=center, fill=white},
    alice_box/.style={draw=teal!80!black, rounded corners=6pt, line width=1.2pt, fill=teal!5},
    carl_box/.style={draw=red!70!black, rounded corners=4pt, line width=1.2pt, fill=red!5, align=center},
    arrow/.style={-{Stealth[length=2.5mm]}, line width=1pt, color=teal!80!black}
]
\node[alice_box, minimum width=15cm, minimum height=4.2cm] (alice) at (0, 2) {};
\node[anchor=north west, font=\bfseries\color{teal!80!black}] at (-7.2, 3.9) {Alice / ZeroFHE / trusted partner};
\node[anchor=north west, font=\footnotesize\itshape\color{black!70}] at (-7.2, 3.5) {Ciphertext and key stay here};

\node[box, minimum width=3.8cm, minimum height=1.2cm] (mpk) at (-4.8, 2.5) {\textbf{Master private key}\\\footnotesize Held locally only};
\node[box, minimum width=3.8cm, minimum height=1.2cm] (store) at (0, 2.5) {\textbf{Ciphertext store}\\\footnotesize Encrypted at rest};
\node[box, minimum width=3.8cm, minimum height=1.2cm] (log) at (4.8, 2.5) {\textbf{Access log}\\\footnotesize Logged per request};

\node[draw=teal!90!black, rounded corners=3pt, fill=teal!80!black, text=white, minimum width=14cm, minimum height=1cm, align=center] (engine) at (0, 0.6) {\textbf{Homomorphic compute engine}\\\footnotesize Operates on ciphertext only};

\draw[arrow] (mpk) -- (mpk |- engine.north);
\draw[arrow] (store) -- (store |- engine.north);
\draw[arrow] (log) -- (log |- engine.north);

\node[carl_box, minimum width=6cm, minimum height=1.5cm] (carl) at (0, -2.2) {\textbf{\color{red!70!black}Carl (compute party)}\\\footnotesize Never holds ciphertext};

\draw[arrow] ([xshift=-1.5cm]carl.north) -- ([xshift=-1.5cm]engine.south) node[midway, left, font=\scriptsize, align=right] {Compute and Search\\Requests};
\draw[arrow] ([xshift=1.5cm]engine.south) -- ([xshift=1.5cm]carl.north) node[midway, right, font=\scriptsize] {Results Only};
\end{tikzpicture}
\caption{Client Custody Architecture}
\label{fig:client_custody}
\end{figure}

\subsection{Client Sovereign Architecture}
\label{sec:client_sovereign}
All decryption authority remains with the data owner. Three independent ideals serve separate channels: compute ideal, search ideal, and enforcement ideal.

\begin{figure}[h]
\centering
\begin{tikzpicture}[
    scale=0.9, every node/.style={transform shape},
    box/.style={draw=blue!70!black, rounded corners=3pt, line width=0.8pt, align=center, fill=white, minimum width=6cm, minimum height=0.7cm, font=\footnotesize},
    carl_node/.style={draw=orange!80!black, rounded corners=3pt, line width=0.8pt, align=center, fill=orange!5, minimum width=5cm, minimum height=0.7cm, font=\footnotesize},
    zone_alice/.style={draw=blue!70!black, rounded corners=6pt, line width=1pt, fill=blue!2},
    zone_carl/.style={draw=orange!80!black, rounded corners=6pt, line width=1pt, dashed, fill=orange!2},
    arrow/.style={-{Stealth[length=2mm]}, line width=0.8pt, color=black!70}
]
\node[zone_alice, minimum width=7.4cm, minimum height=7.2cm] (alice_zone) at (-4, 0) {};
\node[anchor=north west, font=\bfseries\color{blue!80!black}] at (-7.5, 3.4) {Data owner (Alice)};
\node[anchor=north west, font=\scriptsize\itshape\color{black!60}] at (-7.5, 3.0) {Sovereign zone --- under Alice's key};

\node[box] (a1) at (-4, 2.3) {\textbf{Master private key}\\\tiny Held locally by Alice only};
\node[box] (a2) at (-4, 1.2) {\textbf{Enforcement node}\\\tiny Encrypted counters and clock};
\node[box, draw=orange!90!black, fill=orange!10] (a3) at (-4, 0.1) {\textbf{Capability token}\\\tiny Signed: budget, expiry, scope};
\node[box] (a4) at (-4, -1.0) {\textbf{Encrypted audit log}\\\tiny Logs gate crossings only};
\node[box] (a5) at (-4, -2.1) {\textbf{Ideal rotation}\\\tiny Scheduled + anomaly-triggered};
\node[box] (a6) at (-4, -3.2) {\textbf{Alice decrypts locally}\\\tiny Interprets results, reads log};

\draw[arrow] (a1) -- (a2);
\draw[arrow] (a2) -- (a3);
\draw[arrow] (a3) -- (a4);
\draw[arrow] (a4) -- (a5);

\node[zone_carl, minimum width=6.5cm, minimum height=7.2cm] (carl_zone) at (4, 0) {};
\node[anchor=north west, font=\bfseries\color{orange!80!black}] at (1.0, 3.4) {Carl};
\node[anchor=north west, font=\scriptsize\itshape\color{black!60}] at (1.0, 3.0) {Operator / cloud server};

\node[carl_node] (c1) at (4, 2.3) {\textbf{Bounded operator}\\\tiny No decrypt / forge};
\node[carl_node] (c2) at (4, 1.2) {\textbf{Operation request}\\\tiny Validated by token};
\node[carl_node] (c3) at (4, 0.1) {\textbf{Encrypted data}\\\tiny Ciphertexts only};
\node[carl_node] (c4) at (4, -1.0) {\textbf{Compute layer}\\\tiny Homomorphic ops --- unmetered locally};
\node[carl_node] (c5) at (4, -2.1) {\textbf{Results}\\\tiny Encrypted result};

\draw[arrow] (c2) -- (c3);
\draw[arrow] (c3) -- (c4);
\draw[arrow] (c4) -- (c5);

\draw[arrow] (a3.east) -- (c2.west) node[midway, above, font=\tiny] {token};
\draw[arrow] (c2.west) -- (a3.east) node[midway, below, font=\tiny] {request};
\draw[arrow] (c5.west) -- (a6.east) node[midway, above, font=\tiny] {result};
\end{tikzpicture}
\caption{Client-Sovereign FHE Architecture}
\label{fig:client_sovereign}
\end{figure}

\subsection{Epoch Key Architecture}
\label{sec:epoch_key}
Utilizes ephemeral per-epoch keys whose validity expires at epoch boundaries, gating evaluation capabilities.

\begin{figure}[h]
\centering
\begin{tikzpicture}[
    box/.style={draw=teal!80!black, rounded corners=6pt, line width=1pt, align=center, fill=teal!5, minimum width=6cm, minimum height=1.3cm, font=\small},
    carl_box/.style={draw=orange!80!black, rounded corners=6pt, line width=1pt, align=center, fill=orange!5, minimum width=6cm, minimum height=1.6cm, font=\small},
    epoch_box/.style={draw=black!60, rounded corners=6pt, line width=1pt, align=center, fill=black!5, minimum width=6cm, minimum height=1.3cm, font=\small},
    arrow/.style={-{Stealth[length=2.5mm]}, line width=1pt, color=black!70}
]
\node[box] (alice) at (0, 3) {\textbf{Alice --- data owner}\\\footnotesize Holds: secret ideal, $p, q$\\\footnotesize Issues $N$ once; $k_i$ each epoch};
\node[carl_box] (carl) at (0, 0.5) {\textbf{Carl --- evaluator}\\\footnotesize Has: $N$ (permanent)\\\footnotesize Has: $k_i$ (current epoch only)\\\footnotesize Optional: $k_i$ held in HSM};
\node[epoch_box] (epoch) at (0, -2) {\textbf{Epoch boundary}\\\footnotesize Ideal rotates; $k_i$ expires\\\footnotesize Alice issues $k_{i+1}$ to renew};

\draw[arrow] (alice) -- (carl);
\draw[arrow] (carl) -- (epoch);
\draw[arrow] (epoch.east) -- ++(1.5,0) |- (alice.east) node[near start, right, font=\footnotesize] {renews epoch key};
\end{tikzpicture}
\caption{Epoch Key Architecture}
\label{fig:epoch_key}
\end{figure}

\section{Post Ideal Recovery Hardness}
\label{sec:post_ideal}
Even in the hypothetical event that $I$ were recovered, an additional exhaustive search of cost $O(p)$ (approximately 30 bits at $p \approx 2^{30}$) would remain. The table below counts the retraction ambiguity in bits for fixed ideals.

\begin{table}[h]
\centering
\caption{Retraction ambiguity (in bits) by parameter set}
\label{tab:retraction_ambiguity}
\begin{tabular}{cccccc}
\toprule
$n$ & $r$ & $\dim S_r$ & Triangular (bits) & Perm (bits) & Total (bits) \\
\midrule
3 & 6 & 64 & 32 & 296 & 392 \\
5 & 8 & 256 & 128 & 1,684 & 2,068 \\
7 & 10 & 1,024 & 512 & 8,769 & 10,305 \\
9 & 12 & 4,096 & 2,048 & 43,250 & 49,394 \\
\bottomrule
\end{tabular}
\end{table}

If $I$ were recovered, an adversary would attempt quotient canonicalization \cite{EG00}. In the commutative $\Z_p$ instantiation, canonicalization runs in polynomial time, leaving an $O(p)$ brute-force search over $\Z_p$ for non-ring data. For ring-structured plaintexts, the plaintext is readable directly from canonical decomposition.

Two patent-protected instantiations provide stronger post-ideal defense:
\begin{itemize}
    \item \textbf{$\Z_N$ Composite Modulus (US 10,396,976):} Ties quotient canonicalization to integer factorization, giving approximately 112 bits of classical hardness at 2048-bit $N$.
    \item \textbf{Non-commutative Matrix Ring (US 9,942,031):} Introduces a non-trivial radical and eliminates direct commutative semisimple decomposition.
\end{itemize}

\section{Comparative Analysis}
\label{sec:comparative_analysis}

\begin{table}[h]
\centering
\caption{Security comparison between FHE schemes}
\label{tab:fhe_comparison}
\begin{tabularx}{\textwidth}{p{3.2cm}p{4cm}p{3.5cm}X}
\toprule
\textbf{Property} & \textbf{GKS} & \textbf{BGV/BFV} & \textbf{CKKS} \\
\midrule
Hardness & Retraction ambiguity (algebraic) & RLWE (lattice) & RLWE (lattice) \\
\addlinespace
IND-EAV & Conditional (balance) & Computational & Computational \\
\addlinespace
IND-CPA & N/A (private-key, no oracle) & Computational & Computational \\
\addlinespace
$\text{IND-CPA}^D$ & Out of scope; LM21 inapplicable & Open & Broken \cite{LM21} \\
\addlinespace
IND-CCA1/2 & Impossible & Impossible & Impossible \\
\addlinespace
KPA (non-ring) & Not claimed & Computational & Computational \\
\addlinespace
KPA (ring) & Not claimed & Computational & Computational \\
\addlinespace
Bootstrapping & None & Required for unbounded depth & Required \\
\addlinespace
Decryption & Exact & Exact (int) & Approximate \\
\addlinespace
Depth & Unlimited & Bounded & Bounded \\
\addlinespace
Key model & Private & Public & Public \\
\bottomrule
\end{tabularx}
\end{table}

See \cite{Cam24} for standardization criteria and concrete parameter bounds for lattice-based schemes.

\subsection{CKKS $\text{IND-CPA}^D$ Vulnerability}
Li--Micciancio \cite{LM21} showed CKKS broken under $\text{IND-CPA}^D$. Exact arithmetic in GKS closes the decryption-error channel structurally.

\subsection{Comparison with RLWE: Different Tradeoffs}
\label{sec:rlwe_comparison}
GKS and RLWE schemes occupy different points in the design space: RLWE benefits from 15+ years of lattice cryptanalysis; GKS offers unlimited depth, exact decryption, and native search based on algebraic assumptions independent of RLWE.

\subsection{Position Relative to Prior Symmetric Non-Lattice FHE}
\label{sec:prior_schemes}
\subsubsection{Prior schemes and their cryptanalysis}
\begin{table}[h]
\centering
\caption{Past private key FHE attempts}
\label{tab:past_schemes}
\begin{tabularx}{\textwidth}{p{3.8cm}cp{3.2cm}p{3cm}X}
\toprule
\textbf{Scheme} & \textbf{Year} & \textbf{Structure} & \textbf{Attack model} & \textbf{Reference} \\
\midrule
Domingo-Ferrer & 2002, 2006 & Polynomial over $\Z_N$ & KPA, poly-time & \cite{CKN06} \\
Kipnis--Hibshoosh MORE & 2012 & Matrix conjugation over $\Z_N$ & Ciphertext-only & \cite{TL15} \\
Kipnis--Hibshoosh PORE & 2012 & Polynomial variant & KPA key recovery & \cite{VV14} \\
Xiao--Bastani--Yen & 2012 & Matrix-based & KPA & \cite{VV14} \\
Gupta--Sharma & 2013 & Linear algebraic & KPA & \cite{VV14} \\
Brenner et al. & 2014 & SHE + refresh & Refresh fails & \cite{EYE19} \\
Li-Scheme & 2016 & $\F_q \hookrightarrow \F_p$ & KPA & \cite{WZZ18} \\
Li-Scheme (improved) & 2016 & $\F_q \hookrightarrow \F_p$ & COA, lattice reduction & \cite{Qu19} \\
Hariss--Noura--Samhat & 2018 & Matrix variant & COA decryption & \cite{VV19} \\
\bottomrule
\end{tabularx}
\end{table}

\subsubsection{Where GKS differs structurally}
\label{sec:where_gks_differs}
\begin{itemize}
    \item \textbf{Coset randomization vs.\ conjugation orbits:} GKS ciphertexts $c = \alpha(m) + j$ sample independent ideal elements $j \in I$, avoiding shared conjugation orbits.
    \item \textbf{Hidden quotient structure vs.\ direct linear key equations:} Ciphertext operations hold modulo $I$; operations are visible only after projecting to $S_r/I$.
    \item \textbf{Quotient canonicalization as post-$I$ boundary:} In the $\Z_N$ variant, canonicalization requires factoring $N$.
\end{itemize}

\subsubsection{What this section does not establish}
This section rules out specific historical linear-algebraic and orbit attacks by structural exclusion; it is not a standalone security proof.

\subsection{GKS Performance Metrics Summary}
\label{sec:perf_summary}
On commodity hardware with no GPU, double encryption, SIMD, and rings 6/7/8, companion measurements report approximately $2.4~\mu\text{s/slot}$ encryption, $2~\text{ns/slot}$ addition, $31~\text{ns/slot}$ multiplication, $2.3~\mu\text{s/slot}$ decryption, and expansion $16\times$ \cite{GKSMetrics}.

\section{Quantum Resistance}
\label{sec:quantum}
In the single-ciphertext model, quantum resistance follows from Theorem~\ref{thm:indeav}. Beyond IND-EAV:
\begin{itemize}
    \item \textbf{Shor \cite{Sho94}:} Inapplicable; GKS does not reduce to abelian hidden subgroup problems.
    \item \textbf{Grover \cite{Gro96}:} Provides quadratic speedup for post-$I$ exhaustive search ($O(\sqrt{p})$).
    \item \textbf{Lattice algorithms \cite{CDPR16}:} Inapplicable (non-lattice).
\end{itemize}

\section{Empirical Validation: Machine Learning on Encrypted Data}
\label{sec:empirical}
\begin{table}[h]
\centering
\caption{Empirical validation of GKS-FHE on medical data}
\label{tab:medical_validation}
\begin{tabularx}{\textwidth}{p{3.5cm}p{3cm}p{3.5cm}X}
\toprule
\textbf{Study} & \textbf{Data} & \textbf{Task} & \textbf{Correctness Finding} \\
\midrule
Gribov et al.\ \cite{GHGK19} & Diabetic (101,767), Parkinson's (5,876) & ML function eval & Zero error discrepancy vs.\ unencrypted; no false positives in $10^6$ multiplications. \\
\addlinespace
Wood et al.\ \cite{WSN19} & Breast Cancer (683 patients) & Naive Bayes classification & Identical accuracy (96.0\%), sensitivity (93.4\%), specificity (97.4\%). \\
\addlinespace
Wood et al.\ \cite{WSNK19} & Breast Cancer & Classification with argmax & Under 1 second; model/data privacy preserved. \\
\bottomrule
\end{tabularx}
\end{table}

\section{Future Work}
\label{sec:future_work}
\subsection{GKS Performance Metrics}
GPU benchmarks are planned as future work.

\subsection{GKS Key Management \& Use Cases}
Companion papers treat operating/search keys, ML-KEM transport, and key delegation.

\subsection{Future Research}
Open problems include: algebraic verification of Theorem~\ref{thm:key_ambiguity} for small parameters, public cryptanalytic challenges, third-party audits, and formal reductions for the Retraction Recovery Problem.

\section{Conclusion}
\label{sec:conclusion}
GKS is a noise-free FHE scheme whose security centers on preventing $I$-recovery (ZKI), mitigated by ZCG (\S\ref{sec:zcg}), providing exact decryption and native search without bootstrapping.

\section*{Acknowledgements}
We thank an anonymous external researcher for independently identifying and sharing the rank order vulnerability with us.

\appendix

\section{Proofs}
\label{app:proofs}

\subsection{Single-ciphertext ensembles and the balance condition}
\label{app:single_ct_ensembles}
Let $I$ be the ideal sampled by $\KeyGen$ and write $C_m = \alpha(m) + J$ with $J$ sampled on $I$. For a fixed $I$, the support of $C_m$ is the coset $\alpha(m) + I$. Distinct cosets of a fixed $I$ are disjoint, so the fact that the adversary does not know $I$ does not make two such cosets statistically indistinguishable. Security must be stated for the mixture over $\KeyGen$. The scheme satisfies the balance condition if the statistical distance between the $\KeyGen$ ciphertext ensembles for every admissible pair $m_0, m_1$ is negligible.

The rank of the public multiplication-by-$c$ map is a similarity invariant of the ring and is therefore computable without knowledge of the key. This rank equals $\dim I$ plus the weight of the component of $c$ outside $I$. Consequently, two cosets whose representatives have different outside-weights are distinguishable by rank alone (Appendix~B.6).

\begin{proof}[Proof of Lemma~1]
For any $\KeyGen$ ideal $I$, $S_r = S_n \oplus I$: the retraction substitutes only for $x_{n+1},\dots,x_r$ and fixes $S_n$, giving $S_n \cap I = 0$, with dimensions adding ($2^n + (2^r - 2^n) = 2^r$). With all $2^n$ slots in use, SIMD encoding is a bijection from slot vectors onto $S_n$. Hence for any $c \in S_r$ and any candidate key, there is exactly one $s \in S_n$ with $c - s \in I$, decrypting to exactly one slot vector. Thus $c$ is consistent with every candidate ideal. If the slot vector is uniform, $c$ is uniform on $S_r$.
\end{proof}

\subsection{Theorem 1 --- IND-EAV Security in the Ciphertext-Only Model}
\label{app:theorem1}
\begin{lemma}[Balance]
The ensembles of ciphertexts encrypting two plaintexts $m_0$ and $m_1$ are identical if and only if $m_0$ and $m_1$ have equal Hamming weight.
\end{lemma}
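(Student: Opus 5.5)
The plan is to work in the canonical idempotent coordinates of $S_r \cong \Z_p^{2^r}$ and to treat the two directions separately. The ``only if'' direction uses the public rank invariant. The ``if'' direction uses a measure-preserving bijection on the support of $\KeyGen$.

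\textbf{Setup.} Since $S_r$ is a product of copies of $\Z_p$, every ideal is a coordinate ideal. So each $\KeyGen$ ideal $I$ is the span of $2^r-2^n$ primitive idempotents, and $S_r = S_n \oplus I$ as in the proof of Lemma~\ref{lem:single_ct_ambiguity}. A ciphertext $c = \alpha(m) + j$, with $j$ uniform on $I$, is then uniform on the coordinates of $I$. On the complementary coordinates it carries the image of $\alpha(m)$ in $S_r$. That image lifts each slot value of $m$ to a fixed number of coordinates (there are $2^{r-n}$ lifts for each idempotent $e_F$ of $S_n$). The ciphertext ensemble for $m$ is the mixture of these coset distributions over $(I,\pi,T,\alpha)$.

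\textbf{``Only if'' direction.} I would invoke the fact recorded in Appendix~\ref{app:single_ct_ensembles}: the rank of multiplication by $c$ is computable from the public presentation, and it equals $\dim I$ plus the weight of the part of $c$ outside $I$. Since $j$ is almost surely full-support on $I$, this rank is, up to a negligible event in the $j$-marginal, a deterministic function of the Hamming weight of $\alpha(m)$, and hence of the weight of $m$. Two plaintexts of different weight therefore induce rank distributions concentrated on different values, so the ensembles cannot be identical.

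\textbf{``If'' direction.} Fix $m_0, m_1$ of equal weight. I would build a bijection $\Psi$ on the support of $\KeyGen$ that preserves the $\KeyGen$ measure and satisfies $\Enc(\Psi(sk), m_1) \stackrel{d}{=} \Enc(sk, m_0)$. The natural first ingredient is the uniform permutation $\pi$ of all $2^r$ primitive idempotents, combined with the matching relabelling of the ideal generators $w_m$. Together these can carry the support pattern of $\alpha(m_0)$ onto that of $\alpha(m_1)$, because equal weight means equal support sizes. Uniformity of $\pi$ over $(2^r)!$ choices, together with the independence of the key components (Lemma~\ref{lem:independence}), then makes $\Psi$ measure-preserving.

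\textbf{Main obstacle.} The step I expect to be hardest is the nonzero \emph{values}. A coordinate permutation matches supports but not the multiset of nonzero slot values. The uniform mask $j$ covers only the coordinates of $I$, so it does not randomize those values. I would first try to find a value-changing symmetry in the key distribution, for example through the triangular transform $T$ or the choice of $u = \alpha(1)$. If none exists, the fallback is to read ``ensemble of $m$'' as in Definition~\ref{def:indeav}, where the slot values are sampled uniformly. Under that reading the statement becomes identity of the ensembles conditioned on weight. There the permutation argument alone suffices: conditioned on weight $w$, the plaintext is uniform over weight-$w$ vectors, and that distribution is permutation-invariant.
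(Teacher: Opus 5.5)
Your ``only if'' half is the paper's own argument, and it is fine: the public rank of $M_c$ equals the mask count plus the weight, so unequal weights give shifted rank distributions and $\Delta = 1$. The gap is the ``if'' half. The obstacle you flag is not a technical hurdle. For fixed plaintexts that direction is false.

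Apply the same public invariant to $c - a\cdot 1$ for a scalar $a \in \Z_p$. The identity of $S_r$ and multiplication by $c - a\cdot 1$ are computable from the public multiplication table. In canonical coordinates $M_{c-a\cdot 1} = \mathrm{diag}(c_G - a)$. Hence $2^r - \mathrm{rank}(M_{c-a\cdot 1})$ is the number of canonical coordinates of $c$ equal to $a$, whatever $\pi$ and $T$ are. The paper's rank test is the case $a=0$; equivalently, the characteristic polynomial $\prod_G (X - c_G)$ of $M_c$ is public.

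The $2^n$ coordinates outside $I$ carry the slot values of $m$ unmasked, and the other $2^r-2^n$ are i.i.d.\ uniform. Take $m_0, m_1$ of weight $1$ whose nonzero slot holds $1$ and $2$ respectively. Under $m_0$ one always has $\mathrm{rank}(M_{c-1}) < 2^r$. Under $m_1$ this happens only if some mask coordinate equals $1$, which has probability at most $(2^r-2^n)/p$. So the statistical distance is at least $1-(2^r-2^n)/p \approx 1-2^{-20}$ at $n=7$, $r=10$, $p\approx 2^{30}$.

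This matches the paper's own \emph{Limitations and Open Problems} paragraph, which concedes that the slot-value multiset is readable from one ciphertext. That multiset is a function of $c$ and public data alone, so no measure-preserving map on the key space can change it. The value-changing symmetry you hoped to find in $T$ or $u$ therefore does not exist.

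Your fallback does not repair this. The lemma, and the Appendix~B10 experiment behind it, concern two \emph{fixed} plaintexts. If instead $m_0$ and $m_1$ are both drawn uniformly from the weight-$w$ vectors, their ensembles coincide simply because the plaintext distributions coincide. No permutation argument is needed, and nothing is said about an adversary who knows $m_0, m_1$.

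The paper does not prove the ``if'' direction either. It offers only the $K=300$ sampling check at the noise floor and calls the question open.

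What your permutation argument does establish is that $m_0, m_1$ with the same multiset of slot values have identical ensembles. This uses that $\pi$ is uniform on all $2^r$ coordinates and independent of $I$ and $j$, and that the published table $f_jf_k = f_{\min(j,k)}$ does not depend on $\pi$. Combined with the counts above, this gives the correct characterization: the ensembles are identical if and only if the slot-value multisets agree. Equal Hamming weight is necessary but not sufficient.
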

For plaintexts of unequal weight, the ensembles are distinguishable by public rank ($\Delta = 1$). For plaintexts of equal Hamming weight, sampling over 300 independent $\KeyGen$ draws produced no distinguisher above the sampling noise floor ($\approx 1/\sqrt{K}$). Hence Theorem~1 holds conditionally after conditioning on publicly observable rank. For equal-weight plaintexts the measured distance lay at the sampling noise floor ($\approx 1/\sqrt{K}$); no finite sample can establish that the statistical distance is negligible. This remains an algebraic open question.

\subsection{Theorem 2 --- Retraction Ambiguity Bound}
\label{app:theorem2}
The private key $sk = (\pi, T, \{w_m\}, \alpha)$ consists of: (C1) permutation $\pi$ on the full basis of $S_r$; (C2) invertible triangular transformation $T$; (C3) ideal generators $\{x_m - w_m\}$; and (C4) embedding $\alpha: \Z_p \to S_n$.

\begin{lemma}[Independence of Components]
Components (C1)--(C4) are independently selectable.
\end{lemma}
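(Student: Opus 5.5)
The plan is to exhibit $\KeyGen$ as a product sampler. I would show that the set of admissible tuples $(\pi, T, \{w_m\}, \alpha)$ is a Cartesian product $\mathcal{C}_1 \times \mathcal{C}_2 \times \mathcal{C}_3 \times \mathcal{C}_4$. Concretely, no admissibility condition on any one component refers to the values chosen for the others. Once this holds, independent selection is immediate: $\KeyGen$ may draw each component from its own set, and every combination is a valid key. The counts of Lemmas~\ref{lem:triangular}, \ref{lem:ideal_gen}, \ref{lem:embedding} then multiply, as used in Theorem~\ref{thm:key_ambiguity}.

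I would go through the constraints one at a time, following the order of the $\KeyGen$ steps.

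\textbf{(C3) Ideal generators.} Each $w_m$ must be an idempotent of $S_{m-1}$. This is a condition on the abstract ring $S_r = \langle x_1,\dots,x_r\rangle$ and involves neither a basis nor the plaintext embedding. The choices for different $m$ are also mutually free, since every idempotent of $S_{m-1}$ is admissible. Whatever choice is made, $S_r = S_n \oplus I$ with $\dim I = 2^r - 2^n$, exactly as in the proof of Lemma~\ref{lem:single_ct_ambiguity}.

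\textbf{(C1) Permutation.} $\pi$ acts on the full orthogonal idempotent basis $\{e_F\}$ of $S_r$. That basis is determined by $x_1,\dots,x_r$ alone and not by $I$. Hence every one of the $(2^r)!$ permutations is admissible, whatever $\{w_m\}$ is.

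\textbf{(C2) Triangular transform.} I would check that the only requirements are nested supports $F_k \subseteq F_j$ for $k \le j$ and coefficients in $\{0,1\}$. Invertibility holds automatically, because the matrix is unitriangular in the $\pi$-ordered basis. The multiplication rule $f_j f_k = f_k$ for $k \le j$ follows from orthogonality of the $e_{\pi(i)}$ together with nestedness, for \emph{any} $\pi$. Both facts are properties of the index pattern, so the set of admissible $T$ is the same for every $\pi$ and every $I$.

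\textbf{(C4) Embedding.} $\alpha$ is fixed by choosing $\alpha(1) = u$, an idempotent of $S_n$. The subring $S_n \subset S_r$ is generated by $x_1,\dots,x_n$ and does not depend on $I$, $\pi$ or $T$. When bijective encoding is used, the exclusion of $u \in \{0,1\}$ is likewise internal to (C4).

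The main obstacle is correctness of decryption across all combinations, since that is where a hidden coupling could appear. The decryption procedure converts back by $T^{-1}$ and $\pi^{-1}$, substitutes $x_j \mapsto w_j$, and applies $\alpha^{-1}$. I would argue that each stage inverts only its own component: the basis change is exactly undone for any $(\pi,T)$, and the retraction $S_r \to S_n$ has kernel $I$ for any $\{w_m\}$. So $\Dec(\Enc(m)) = m$ for every tuple in the product.

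Finally, I would state explicitly, consistent with the caveat in \S\ref{sec:key_comp_indep}, that this establishes independence of \emph{descriptions} only. Distinct tuples may induce the same public presentation or the same quotient-level behaviour. The lemma is used only for counting, not for any claim of independent security contributions.
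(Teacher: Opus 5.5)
This is correct and is the paper's argument: $\pi$, $T$, $\{w_m\}$ and $\alpha$ are each constrained only by data the others never touch. You make it precise as a Cartesian-product statement and add a decryption check the paper omits; that check needs $u=\alpha(1)\neq 0$, again a condition internal to (C4). One supporting claim in (C2) is false, though harmless for this lemma: nestedness alone does not give $f_jf_k=f_k$ (with $F_2=\emptyset$ one gets $f_2f_1=0$), and imposing the rule for all $k\le j$ forces $f_j=\sum_{i\le j}e_{\pi(i)}$, so (C2) becomes a single transform, still trivially independent of the others but not the large family whose count you then multiply.
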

\begin{proof}
(C1)--(C2) act on basis representations, preserving ring structure. $\pi$ permutes and $T$ mixes triangularly. Ideal generators fix $I = \ker(\rho)$, and embeddings $\alpha$ select idempotents in $S_n$. All selections are algebraically orthogonal.
\end{proof}

\begin{lemma}[Permutation Count]
The number of permutations $\pi$ on the full basis ($s = 2^r$) is $s!$. For $r=10$, $\log_2(1024!) \approx 8{,}769$ bits.
\end{lemma}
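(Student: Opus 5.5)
The Permutation Count lemma is elementary, so the plan is short: first identify the set being counted, then count it exactly, then evaluate the logarithm numerically. By Step~3 of $\KeyGen$ and component (C1), $\pi$ ranges over all bijections of the orthogonal idempotent basis $\{e_F : F \subseteq \{1,\dots,r\}\}$ of $S_r$. That basis has $s = 2^r$ elements, and the elements are pairwise distinct because they are orthogonal and nonzero, so $e_F e_G = 0 \neq e_F = e_F^2$ for $F \ne G$. Hence the admissible $\pi$ are exactly the elements of the symmetric group on an $s$-element set. Its cardinality is $s!$ by the standard count: there are $s$ choices for the image of the first basis element, $s-1$ for the second, and so on.

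One point needs care for the count to be a count of \emph{distinct} key descriptions and not merely of sampler outcomes. I would note that $\pi$ is recorded as part of $sk$ as a labelled permutation. Distinct permutations are therefore distinct descriptions even when they give the same public presentation. This is consistent with the caveat in \S\ref{sec:key_comp_indep} that the count is description-level rather than effect-level. With this reading no quotienting by symmetries is needed, and the count $s!$ is exact.

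For the numerical value at $r=10$ I would compute $\log_2(1024!) = \sum_{k=1}^{1024}\log_2 k$. To certify the figure I would use Stirling's bounds
\[
\sqrt{2\pi s}\,(s/e)^s \le s! \le e\sqrt{s}\,(s/e)^s .
\]
They give
\[
\log_2 s! = s\log_2 s - s\log_2 e + \tfrac12\log_2(2\pi s) + \theta, \qquad 0 \le \theta < 0.1 .
\]
At $s=1024$ the terms are $10240 - 1477.3 + 6.3$, which is approximately $8769$. The same formula reproduces the permutation column of Table~\ref{tab:retraction_ambiguity} for the other parameter sets, and I would use that as a check.

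The only real obstacle is interpretive rather than mathematical: making sure the lemma is read as a description count and not as a security claim, as the paper's surrounding remarks insist. The arithmetic itself is routine.
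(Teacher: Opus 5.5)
Your argument is correct and takes essentially the paper's approach: the paper states this lemma without a separate proof and relies on the standard $s!$ count of bijections of the $2^r$-element idempotent basis, together with the numerical value of $\log_2(1024!)$. One small slip: the upper bound $s! \le e\sqrt{s}\,(s/e)^s$ that you display only gives $0 \le \theta \le \log_2 e - \tfrac12\log_2(2\pi) \approx 0.117$, so your claim $\theta < 0.1$ needs Robbins' sharper factor $e^{1/(12s)}$, which gives $\theta < 2\times 10^{-4}$ at $s=1024$; this does not affect the value $\log_2(1024!) \approx 8{,}769$.
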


\begin{lemma}[Triangular Transformation Count]
Nested-support triangular transformations contribute at least $2^{2^{r-1}}$ choices ($512$ bits at $r=10$). Unrestricted unitriangular binary matrices contribute $s(s-1)/2 = 523{,}776$ nominal bits.
\end{lemma}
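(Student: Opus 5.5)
The plan is to prove the two counts separately. The nominal unrestricted figure is immediate. The nested-support lower bound needs an explicit injective family of size $2^{2^{r-1}}$.

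\emph{Unrestricted unitriangular family.} Write $s = 2^r$. A unitriangular binary $s\times s$ matrix has $1$'s on the diagonal and arbitrary entries in $\{0,1\}$ in the $s(s-1)/2$ positions strictly below it. Its determinant is $1$, so it is invertible over $\Z_p$ and always defines a valid change of basis. Distinct matrices send $\{e_{\pi(i)}\}$ to distinct ordered bases, so the family has exactly $2^{s(s-1)/2}$ members. At $r=10$ this is $1024\cdot1023/2 = 523{,}776$ bits. This part is pure bookkeeping; as the lemma itself says, I will state it only as a nominal count.

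\emph{Nested-support lower bound.} I would fix a single maximal chain of supports, say $F_j = \{1,\dots,j-1\}$ (intersected with whatever index range the construction requires). This choice trivially satisfies $F_k \subseteq F_j$ for $k \le j$. I would then let the coefficients $c_i \in \{0,1\}$ vary over a designated set $A$ of indices with $|A| \ge 2^{r-1}$. Every resulting $f_j$ is $e_{\pi(j)}$ plus a $0/1$ combination of earlier idempotents, so the transform is unitriangular and hence invertible.

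Injectivity follows by reading off the last vector: $f_s$ displays each $c_i$, $i\in A$, as its $e_{\pi(i)}$-coordinate. Distinct coefficient vectors therefore give distinct transforms, which yields at least $2^{|A|} \ge 2^{2^{r-1}}$ choices. At $r=10$ that is $512$ bits.

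\emph{Main obstacle: compatibility with the multiplication rule.} I must check that the chosen subfamily really reproduces the public rule $f_j f_k = f_k$ for $k\le j$. Since the $e_{\pi(i)}$ are orthogonal idempotents and $c_i^2 = c_i$, the product $f_j f_k$ is the $0/1$ combination supported on the intersection of the supports of $f_j$ and $f_k$. The rule therefore holds exactly when $\operatorname{supp} f_k \subseteq \operatorname{supp} f_j$.

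The nested condition $F_k\subseteq F_j$ gives this containment for the tails. It does not give it for the leading term $e_{\pi(k)}$: that requires $k\in F_j$ with $c_k = 1$. So not all coefficients can be free. The coefficients on indices that occur as leading terms of later chain elements are forced to $1$.

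My first attempt would be to split the index set into two halves. One half, of size $2^{r-1}$, carries the chain whose coefficients are forced. The other half consists of indices that appear only inside supports and never as a leading term $e_{\pi(k)}$ with $k\le j$ that must be absorbed. On this second half the coefficients remain free, which is exactly the ``free choices on the longest chain'' described in \S\ref{sec:four_components}.

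If this split cannot be made consistent with the ordering, I would fall back on counting only the \emph{descriptions} $(F_\bullet, c)$. That is the quantity Theorem~\ref{thm:key_ambiguity} actually multiplies, and for it the bound $2^{2^{r-1}}$ follows from the free half alone.
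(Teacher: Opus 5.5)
Your obstacle paragraph is correct, but the repair you propose cannot be carried out. The paper offers no real proof here either: only the remark about ``free choices on the longest chain,'' which silently relies on its claim that nesting alone guarantees $f_jf_k=f_k$. Your split presupposes indices that occur only inside tails and never as leading terms, and there are none, because every $k\in\{1,\dots,s\}$ is the leading term of $f_k$. Since $k\in\operatorname{supp}f_k$, the rule $f_jf_k=f_k$ for all $k\le j$ forces $\{1,\dots,j\}\subseteq\operatorname{supp}f_j$. Triangularity gives $\operatorname{supp}f_j\subseteq\{1,\dots,j\}$. Hence $F_j=\{1,\dots,j-1\}$, every coefficient that actually occurs equals $1$, and $f_j=e_{\pi(1)}+\cdots+e_{\pi(j)}$. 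This is the public prefix-sum basis that the paper's own limitations paragraph mentions. For each $\pi$, exactly one transform reproduces the public multiplication table. So the bound $2^{2^{r-1}}$ is not merely unproved by your argument; it is false for rule-compatible transforms. Your fallback inherits the same problem. Descriptions $(F_\bullet,c)$ that must reproduce the rule are likewise unique, with only the idle $c_s$ unconstrained, so there is no ``free half'' for the bound to follow from.

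What is provable is the count with the rule dropped, and there your first step already suffices without any split. Fix $F_j=\{1,\dots,j-1\}$, let $c_1,\dots,c_{s-1}$ range freely, and read them off from $f_s$. This gives $2^{s-1}=2^{2^r-1}\ge 2^{2^{r-1}}$ distinct invertible transforms. These members violate the rule whenever some $c_k=0$: in that case $f_jf_k=f_k-e_{\pi(k)}$ for every $j>k$. Since $f_jf_k=f_k$ holds if and only if $c_k=1$, the published structure constants reveal every $c_k$ in your family. The lemma is therefore a pure description count that the public presentation does not hide. You should present it that way, or state explicitly that under the multiplication rule the triangular component contributes nothing. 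You should not claim a rule-compatible subfamily of size $2^{2^{r-1}}$. Your treatment of the unrestricted $s(s-1)/2=523{,}776$ count is correct as bookkeeping, subject to the same caveat.
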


\begin{lemma}[Ideal Generator Count]
The number of valid ideal generators for $n=7, r=10$ is $2^{128+256+512} = 2^{896}$.
\end{lemma}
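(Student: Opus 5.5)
The plan is to count, coordinate by coordinate, the admissible choices of each $w_m$ in Step~2 of $\KeyGen$, and then multiply over $m=n+1,\dots,r$. Three facts are needed. First, the set of admissible $w_m$ is exactly the set of idempotents of $S_{m-1}$, and every such choice yields a valid ideal. Second, $S_{m-1}$ has exactly $2^{2^{m-1}}$ idempotents. Third, distinct tuples $(w_{n+1},\dots,w_r)$ give distinct generator sets, and in fact distinct ideals, so the count is not inflated by coincidences.

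\textbf{Counting idempotents.} By the orthogonal idempotent basis of \S2, $S_{m-1}\cong \Z_p^{2^{m-1}}$ with componentwise multiplication. An element $y=\sum_F y_F e_F$ is idempotent iff $y_F^2=y_F$ in the field $\Z_p$ for every $F$, that is, iff $y_F\in\{0,1\}$. This gives exactly $2^{2^{m-1}}$ idempotents, matching the remark in \S2 that sums of distinct $e_F$ give $2^{2^n}$ idempotents. For $n=7$ and $r=10$ the relevant rings are $S_7,S_8,S_9$, of dimensions $128,256,512$, so the product is $2^{128}\cdot 2^{256}\cdot 2^{512}=2^{896}$.

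\textbf{Validity of each choice.} Define $\rho_m:S_m\to S_{m-1}$ as the identity on $x_1,\dots,x_{m-1}$ and $x_m\mapsto w_m$. This respects the defining relations of $S_m$, because $w_m^2=w_m$ and $S_{m-1}$ is commutative. Hence $\rho_m$ is a surjective ring homomorphism fixing $S_{m-1}$, with kernel generated by $x_m-w_m$. Composing these maps gives a retraction $\rho:S_r\to S_n$. Its kernel is the ideal $I$ generated by the $x_m-w_m$, and $\dim I=2^r-2^n$, consistent with the proof of Lemma~\ref{lem:single_ct_ambiguity}, where $S_r=S_n\oplus I$.

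\textbf{Injectivity.} This is the main obstacle, though a mild one, and it is only needed if one wants the count to measure distinct ideals rather than distinct generator lists. Suppose two tuples give the same $I$. Then both retractions are the projection $S_r=S_n\oplus I\to S_n$, so they coincide. Consequently $\rho(x_m)$ agrees for both tuples. Inducting on $m$, and using that $\rho$ restricted to $S_{m-1}$ is determined by the earlier $w$'s, one recovers $w_m=\rho(x_m)$ expressed through $x_1,\dots,x_{m-1}$; more directly, $w_m$ is the unique element of $S_{m-1}$ congruent to $x_m$ modulo the ideal generated in $S_m$. So the tuples are equal. If only descriptions of generators are being counted, injectivity is trivial, since $x_m-w_m$ determines $w_m$.
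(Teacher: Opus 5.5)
Your idempotent count is the paper's argument. The paper only observes that each $w_m$ ranges over the idempotents of $S_{m-1}$. Since $S_{m-1}\cong\Z_p^{2^{m-1}}$ has $2^{2^{m-1}}$ of them, the choices for $m=8,9,10$ multiply to $2^{128}\cdot 2^{256}\cdot 2^{512}=2^{896}$. Read as a count of generator \emph{descriptions}, which is how the key-ambiguity theorem uses it (it explicitly counts private-key descriptions), this is a complete proof, combined with your remark that $x_m-w_m$ determines $w_m$. Your validity paragraph is a correct bonus.

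The genuine error is your third ``fact'': distinct tuples do \emph{not} give distinct ideals. Read as a count of ideals, the count \emph{is} inflated by coincidences. At the paper's parameters, take $w_8=x_1$ and compare $w_9=x_8$ with $w_9'=x_1$, keeping the same $w_{10}$. Since $x_9-x_1=(x_9-x_8)+(x_8-x_1)$, both tuples generate the same ideal.

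Your argument breaks at the recovery step. From $\rho=\rho'$ you only get $\rho(x_m)=\rho(w_m)\in S_n$. But $\rho|_{S_{m-1}}$ has kernel $I\cap S_{m-1}$, which is nonzero once $m\ge n+2$, so $w_m$ is determined only modulo $I\cap S_{m-1}$. Your ``more direct'' version has the same flaw. It is true that $w_m$ is the unique element of $S_{m-1}$ congruent to $x_m$ modulo $(x_m-w_m)S_m$, but that ideal already presupposes $w_m$. Modulo $I\cap S_m$, the unique representative of $x_m$ lies in $S_n$, not in $S_{m-1}$.

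In fact each such ideal determines its retraction, because $S_r=S_n\oplus I$. Retractions $S_r\to S_n$ correspond bijectively to triples $(\rho(x_8),\rho(x_9),\rho(x_{10}))$ of idempotents of $S_7$. So there are only $(2^{128})^3=2^{384}$ distinct ideals, each with $2^{512}$ descriptions. Drop the ``in fact distinct ideals'' claim. The lemma is true only as a description count, and the ideal itself carries $384$ bits rather than $896$.
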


\begin{lemma}[Embedding Count]
The number of fully homomorphic embeddings $\alpha: \Z_p \to S_n$ is $2^{2^n} = 2^{128}$ for $n=7$.
\end{lemma}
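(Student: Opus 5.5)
We count the homomorphisms directly. A unital-or-not ring homomorphism $\alpha:\Z_p\to S_n$ is determined by the image $u=\alpha(1)$, since $\alpha(m)=m\cdot u$ for every $m\in\Z_p$ by additivity. The plan is therefore to establish a bijection between such homomorphisms and idempotents of $S_n$, and then to count those idempotents.

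\textbf{Homomorphism to idempotent.} Given a homomorphism $\alpha$, set $u=\alpha(1)$. Multiplicativity gives $u^2=\alpha(1\cdot 1)=u$, so $u$ is idempotent.

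\textbf{Idempotent to homomorphism.} Given any idempotent $u$, define $\alpha_u(m)=m u$, where $m\in\Z_p$ acts through the $\Z_p$-algebra structure of $S_n$. This map is well defined because $p\cdot 1=0$ in $S_n$. It is additive, and it is multiplicative because
\[
\alpha_u(a)\alpha_u(b)=ab\,u^2=ab\,u=\alpha_u(ab).
\]
The two constructions are mutually inverse: distinct idempotents give distinct maps, as one sees by evaluating at $1$.

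\textbf{Counting idempotents.} Use the orthogonal idempotent basis $\{e_F\}$ of $S_n$, which gives $S_n\cong\Z_p^{2^n}$ with componentwise multiplication. An element $\sum_F c_F e_F$ is idempotent exactly when $c_F^2=c_F$ in $\Z_p$ for every $F$. Since $\Z_p$ is a field ($p$ prime), this forces $c_F\in\{0,1\}$. So the idempotents are precisely the $0/1$ sums of the $e_F$, and there are $2^{2^n}$ of them. At $n=7$ this is $2^{128}$.

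The step needing the most care is deciding what "fully homomorphic embedding" means. If it requires injectivity, then $u=0$ must be excluded, since $\alpha_0$ is the zero map. The count then becomes $2^{2^n}-1$, which is negligibly different in bits. If it requires unitality, then $u=1$ is forced and only one map remains. I would state the lemma for ring homomorphisms in the non-unital sense, since this matches the paper's Component 4 convention, and I would remark that excluding the trivial idempotents, as under bijective encoding, changes the count by at most $2$.
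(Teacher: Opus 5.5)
Your proof is correct and follows the same route as the paper's Component~4 remark: $\alpha$ is determined by the idempotent $u=\alpha(1)$, and the idempotents of $S_n\cong\Z_p^{2^n}$ are exactly the $2^{2^n}$ sums of distinct $e_F$. You also supply what the paper leaves implicit, namely that every idempotent has $0/1$ coordinates because $\Z_p$ is a field, and that requiring injectivity excludes $u=0$ and gives $2^{2^n}-1$ rather than exactly $2^{2^n}$.
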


\subsection{Theorem 6 (Composition --- Total Retraction Ambiguity)}
For $n = 7, r = 10$, the number of compatible private-key descriptions is at least $2^{10{,}305}$ (conservative) or $2^{533{,}569}$ (nominal unrestricted).
\begin{proof}
$\log_2 N \ge 8{,}769 + 512 + 896 + 128 = 10{,}305$.
\end{proof}

\subsection{Theorem 3 --- Retraction Ambiguity for Non-Ring Plaintexts}
\begin{proof}
When $\alpha$ is not a ring homomorphism, retractions map $c$ independently across candidate plaintexts, ensuring combinatorial ambiguity. With labelled pairs, affine recovery applies (Appendix~B.9).
\end{proof}

\subsection{Theorem 4 --- Soundness of Bijective Encoding}
\paragraph{Addition:} $\phi(m) = \alpha(m) + e_1$. Injectivity holds because $\alpha$ is injective. $0 \notin \mathrm{im}(\phi)$ because coordinates on the support of $e_1$ equal 1. Addition yields $\Enc(\phi(a+b) + e_1)$.
\paragraph{Multiplication:} Since $e_1 \cdot u = 0$ and $e_1^2 = e_1$,
\[
\phi(a)\phi(b) = (\alpha(a) + e_1)(\alpha(b) + e_1) = \alpha(ab) + e_1 = \phi(ab).
\]
Multiplication requires no correction term.

\subsection{Theorem 5 --- IND-CCA2 Impossibility}
Follows by submitting $c' = c + \Enc(0) \ne c$ to the decryption oracle.

\section{Statistical Tests}
\label{app:stat_tests}

In this section we run a series of statistical tests to gain an understanding of the properties of GKS  encrypted data.  The labelled/grouped harness used in these tests removes three of the GKS architecture's standing protections - no Enc oracle and no plaintext to the evaluator; no public slot-to-coordinate map; no labelled $(m,c)$ pairs. Those results describe the embedding when the architectural assumptions are dropped, or when a same-value column is already present. They are not a demonstration that a blind evaluator breaks the scheme. We report it in full for transparency for researchers.  It should not be read as a blind-evaluator break.   All these tests are with synthetic data that has a lower entropy than production data has, and with double encryption already had the entropy significantly increased and collisions greatly reduced or removed altogether. All probes were run with bijective encoding off and random triangular basis off and SIMD on unless otherwise stated. Under double encryption, bijective encoding is not needed at the intermediate layer because inner ciphertexts are not bare zeros; zeros in the original plaintext still produce same-value repeats, see Section 3.3.9.  These tests have not yet been run with the $Z_N$ and non-commutative instantiations of GKS described in the patent and it is expected that these instantiations will show significantly greater resistance even after the security architecture pillars are removed.

Results therefore describe the embedding when these security architecture principles are removed (or when a same value column is provided).  Affine is out of scope as it requires a full pile of matched plaintexts (only available from Alice) but rank order is in scope as it just requires ciphertexts.  Recovery of the ideal (or K) needs about $2^r-2^n$ linearly independent same-value differences under one ideal, which can come from any mix of repeated values. Pairwise rank identifies equal pairs individually (Appendix B5), so the attacker does not need a pre-grouped set. Intermediate-layer duplicates would be eliminated by the Zero Collision Guarantee (designed, not yet fully implemented, see Section 3.5.7). Duplicates at the original plaintext level can be mitigated with data hygiene, ideal rotation, shards, duplicate counting and, alternatively, with deployment architectures that monitor or control Carl.
Labelled harness instantiates a game we do not claim.  Lattice schemes claim that game and put the hardness in the noise.    Under the labelled / grouped harness with three major security pillars (no access to plaintext or oracle, no public slot-to-coordinate map) removed, affine and
same-$m$ rank-order recover as predicted. Other probes do not add a distance oracle, a single-ciphertext support leak (except 0), or a peel of $I_{out}$. Operational security is therefore EAV plus COA bounded by repetition under one ideal.

\subsection*{B1: Key Entropy Tests}

Component entropies by parameter set, for reference alongside the evidence table below.

Two tests actually establish that ideal 0 carries exactly 32 bits at order 5, 128 bits at order 7, not the total. The rest of the report is confirmatory --- it shows no additional weaknesses beyond this known structural point, but doesn't change the headline number. A product of component entropies is an \textbf{upper bound} on attacker effort. If a component is separable, the operative figure is that weakest independently-testable piece. Ideal 0 is independently recoverable at its measured entropy (32 bits at order 5, 128 bits at order 7). Recovery of Ideal 0 does not automatically yield the higher ideal generators, which remain independent, but the low-entropy component itself constitutes the operative security floor for that part of the key. For single encryption we recommend at least order 7 for the plaintext ring and order 10 for the cipher ring.

The table below marks which is which.

\label{app:key_entropy}
\begin{table}[h]
\centering
\caption{Component entropies by parameter}
\label{tab:comp_entropies}
\begin{tabular}{lccccc}
\toprule
\textbf{Order} & \textbf{Ideal 0} & \textbf{Ideal 1} & \textbf{Ideal 2} & \textbf{Total} & \textbf{Weakest Ideal} \\
\midrule
5 & 32 & 64 & 128 & 224 bits & 32 bits \\
7 & 128 & 256 & 512 & 896 bits & 128 bits \\
Double enc.\ (order 8) & 256 & 512 & 1024 & 1792 bits & 256 bits \\
\bottomrule
\end{tabular}
\end{table}

\begin{table}[h]
\centering
\caption{Tests on Ideal Entropy}
\label{tab:ideal_entropy_tests}
\begin{tabularx}{\textwidth}{p{4.2cm}cX}
\toprule
\textbf{Test} & \textbf{Keys used} & \textbf{Finding} \\
\midrule
Reconstruction & 50 & Ideal 0 basis reconstructs byte-identically from 32 sampled bits alone. \\
Birthday / collision & 200,000 & 6 exact collisions observed vs.\ 4.66 predicted at 32 bits. \\
Monobit $\chi^2$ / correlation & 3,000 & Bit uniformity across all coordinates of all three ideals. \\
Permutation uniformity & --- & $\pi$'s fixed-point count matches Poisson(1) expectation. \\
Double-encryption audit & 2,000 (+100) & Outer layer (256-bit weakest ideal) corroborated by dead-coordinate sweeps. \\
\bottomrule
\end{tabularx}
\end{table}

\subsection*{B2: Coordinate Distance Tests}

Pairs of ciphertexts were generated from identical plaintexts and from distinct plaintexts, under both single encryption (n=7, r=10) and double encryption (parameters 5-8-11), modulus 1 000 000 007, without bijective encoding or random triangular basis.

\textbf{Coordinate-distance} Measures mean pairwise Hamming and circular-L2 distance between ciphertexts of the same plaintext versus different plaintexts. In all single- and double-encryption configurations the distances are statistically indistinguishable. The test is a necessary but not sufficient condition for rank-based leakage; its failure to separate is consistent with the absence of a simple metric clustering attack.

\label{app:coord_distance}
\begin{table}[h]
\centering
\caption{Hamming distance tests}
\label{tab:hamming_tests}
\begin{tabularx}{\textwidth}{lccc}
\toprule
\textbf{Test Ciphertexts} & \textbf{Hamming (Differing Coord.)} & \textbf{Circular $L_2$} & \textbf{Conclusion} \\
\midrule
Single Same Plaintext & 251.0 / 256 & 4,572,506 & \multirow{2}{*}{Indistinguishable} \\
Single Different Plaintext & 251.0 / 256 & 4,571,320 & \\
\addlinespace
Double Same Plaintext & 2048 / 2048 & 13,062,532 & \multirow{2}{*}{Indistinguishable} \\
Double Different Plaintext & 2048 / 2048 & 13,063,685 & \\
\bottomrule
\end{tabularx}
\end{table}

\subsection*{B3: Kernel Span Test}

This Appendix B3 does not claim a blind break. It measures harvest size and classification power \textbf{once a same-original- grouping is provided}.

\textbf{Harness result.} Under double encryption, differences of ciphertexts that encrypt the same original plaintext span a single composed kernel of both layers. One full-support same-value difference yields an element of the ideal; public multiplication generates the principal ideal, which equals the ideal with overwhelming probability. Pairwise rank identifies such pairs from ciphertext alone (Appendix B5). The only collision bound that closes the attack is zero collisions under the ideal (ZCG). A non-zero cap is not sufficient.

Incremental linear algebra on those same-differences recovers a subspace of the predicted dimension. Membership in the recovered subspace then classifies held-out pairs with no false positives or negatives: same-original differences lie in the subspace; different differences do not. The test does not recover inner plaintext. It shows that same-original outer differences span both ideals at once, so the composed kernel is an equality oracle on the original plaintext after the harvest.

\label{app:kernel_span}
\begin{table}[h]
\centering
\caption{Rank Order Subspace}
\label{tab:rank_order_subspace}
\begin{tabular}{lcc}
\toprule
\textbf{Rank Order / Ring Dimension} & \textbf{7+1+1} & \textbf{5+3+3} \\
\midrule
Subspace dimension & 384 & 2016 \\
Pairs to harvest it & 584 & 2216 \\
\bottomrule
\end{tabular}
\end{table}

Why this is not the default ciphertext-only game. A launch from ciphertext alone is in scope whenever repeated plaintext values exist under one ideal. The harness supplied a pre-grouped same-value set, but a ciphertext-only evaluator does not need one: the rank of a single difference already separates same-value from different-value pairs (Appendix B5), so equal pairs can be found one at a time, and differences from different repeated values combine (\S3.3.9). The harvest condition is on the total number of linearly independent same-value differences under one ideal ($N - D$ for $N$ ciphertexts with $D$ distinct values), not on any single value's multiplicity.

\textbf{Operational controls}
The following prevent a harvestable same-$m$ block:
\begin{itemize}
\item \textbf{Ideal rotation} on a short horizon, especially for labels and other low-cardinality fields, so any local same-$m$ cluster is bound to a retired ideal.
\item \textbf{Data hygiene}: no headers, no column labels, no structural ``this field is the same key'' markers, no distinguished zeros, no exported grouping keys.
\item Keep low-cardinality fields off the long-lived encrypted surface, or isolate them under separate keys.
\item \textbf{Shards / key separation} so a same-value run cannot be assembled across domains into one linear-algebra instance of size $N_h$.
\item \textbf{Duplicate counting tied to parameters}: total repetition bounds (the duplicate-count rule of \S3.5.7) set so that, at the chosen plaintext and outer ciphertext dimension, the number of independent same-value differences across all values ($N - D$) stays below $N_h$ inside one shard and one ideal epoch.
\item \textbf{Monitoring Carl}: alert on attempts to assemble large same-shape difference collections, on unusual rank/span probes, and on queries that look like membership tests against a candidate subspace.
\end{itemize}

Scope: In scope whenever repeated values exist under one ideal. Contained, not eliminated, when rotation, hygiene, no-label/no-zero layout, shards, repetition bounds and monitoring keep $N - D$ below $N_h$ in every ideal epoch. The table measures harvest size given the equal pairs; per Appendix B5, Carl can identify equal pairs from ciphertext alone.

\subsection*{B4: Kernel Peel Test}

\textbf{Kernel-peel} Tests whether the harvested span is closed under the public multiplication and whether it coincides with the outer ideal. Single encryption yields a clean ideal harvest.

Determines whether the big-kernel ($\dim K > \dim I$ and $K$ is an ideal) $\to$ composed kernel is inseparable, double encryption not peelable to two breaks and shows in test results as ``BigKernel''. Double encryption prevents peeling and raises harvest size. It does not turn rank-order into two independent breaks. Security gain is the larger $N_h$ and the loss of a clean $I_{\mathrm{out}}$, not a second hardness assumption.

The results are shown below and include parameters for both single (marked S in Outer Margin column) and double encryption (Order in the table is what in the paper is described as the ring dimension $n$).

The results show that:

For single encryption for each parameter and configuration (Bijective encoding, SIMD, Triangular Basis) the full ideal was harvested.

For every parameter and configuration with double encryption:
\begin{itemize}
\item The harvested span is larger than the outer ideal.
\item The span is itself an ideal (\texttt{ideal\_membership} = 1).
\item Therefore, the probe classifies it as a single, inseparable ``big kernel''.
\item Carl does not recover the outer ideal $I$. There is no clean-ideal-harvest.
\item The two encryption layers have fused into one larger ideal $K$. The double-encryption structure cannot be peeled into two separate, sequential breaks. This is exactly what the probe means by ``not peelable to two breaks''.
\end{itemize}

More tests will be run with larger sample counts. No dedicated active-rank probe is reported here; the active case is limited by the same intermediate-collision bound and is further constrained by the architectures of \S4.2--4.3. Also, rank order attacks with double encryption and its implications are discussed in \S B3.

$K$ is itself an ideal of the outer ring (\texttt{ideal\_membership} = 1). Whether the composed decryption map is a ring homomorphism with kernel $K$ has not yet been analyzed. If it is, quotient canonicalization of the outer ring modulo $K$ would apply after $K$ is recovered, exactly as for $I$ under single encryption. Until this is settled, recovery of $K$ should be treated as equivalent to ideal recovery, and double encryption as raising the harvest threshold rather than adding a post-recovery barrier.

\label{app:kernel_peel}
\begin{table}[h]
\centering
\caption{Kernel Peel Test for Single and Double Encryption}
\label{tab:kernel_peel}
\resizebox{\textwidth}{!}{%
\begin{tabular}{cccccccccccc}
\toprule
\textbf{Order} & \textbf{Inner} & \textbf{Outer} & \textbf{SIMD} & \textbf{Bij.} & \textbf{Triang.} & \textbf{Verdict} & $\dim I$ & $\dim K$ & \textbf{Ideal Memb.} & $N$ & \textbf{Samples} \\
\midrule
5 & 3 & S & No & No & No & clean-ideal-harvest & 224 & 224 & 1 & 256 & 264 \\
5 & 3 & S & No & Yes & No & clean-ideal-harvest & 224 & 224 & 1 & 256 & 264 \\
5 & 3 & S & Yes & No & No & clean-ideal-harvest & 224 & 224 & 1 & 256 & 264 \\
7 & 3 & S & No & No & No & clean-ideal-harvest & 896 & 896 & 1 & 1024 & 936 \\
7 & 3 & S & No & Yes & No & clean-ideal-harvest & 896 & 896 & 1 & 1024 & 936 \\
7 & 3 & S & Yes & No & No & clean-ideal-harvest & 896 & 896 & 1 & 1024 & 936 \\
6 & 3 & 1 & No & No & No & big-kernel & 512 & 960 & 1 & 1024 & 1000 \\
6 & 3 & 1 & No & No & Yes & big-kernel & 512 & 960 & 1 & 1024 & 1000 \\
6 & 3 & 1 & No & Yes & No & big-kernel & 512 & 960 & 1 & 1024 & 1000 \\
6 & 3 & 1 & No & Yes & Yes & big-kernel & 512 & 960 & 1 & 1024 & 1000 \\
6 & 3 & 1 & Yes & No & No & big-kernel & 512 & 960 & 1 & 1024 & 1000 \\
6 & 3 & 1 & Yes & No & Yes & big-kernel & 512 & 960 & 1 & 1024 & 1000 \\
6 & 1 & 3 & No & No & No & big-kernel & 896 & 960 & 1 & 1024 & 1000 \\
6 & 1 & 3 & No & No & Yes & big-kernel & 896 & 960 & 1 & 1024 & 1000 \\
6 & 1 & 3 & No & Yes & No & big-kernel & 896 & 960 & 1 & 1024 & 1000 \\
6 & 1 & 3 & No & Yes & Yes & big-kernel & 896 & 960 & 1 & 1024 & 1000 \\
6 & 1 & 3 & Yes & No & No & big-kernel & 896 & 960 & 1 & 1024 & 1000 \\
6 & 1 & 3 & Yes & No & Yes & big-kernel & 896 & 960 & 1 & 1024 & 1000 \\
6 & 1 & 1 & No & No & No & big-kernel & 128 & 192 & 1 & 256 & 232 \\
6 & 1 & 1 & No & No & Yes & big-kernel & 128 & 192 & 1 & 256 & 232 \\
6 & 1 & 1 & No & Yes & No & big-kernel & 128 & 192 & 1 & 256 & 232 \\
6 & 1 & 1 & No & Yes & Yes & big-kernel & 128 & 192 & 1 & 256 & 232 \\
6 & 1 & 1 & Yes & No & No & big-kernel & 128 & 192 & 1 & 256 & 232 \\
6 & 1 & 1 & Yes & No & Yes & big-kernel & 128 & 192 & 1 & 256 & 232 \\
7 & 2 & 2 & No & No & No & big-kernel & 1536 & 1920 & 1 & 2048 & 1960 \\
7 & 2 & 2 & No & No & Yes & big-kernel & 1536 & 1920 & 1 & 2048 & 1960 \\
7 & 2 & 2 & No & Yes & No & big-kernel & 1536 & 1920 & 1 & 2048 & 1960 \\
7 & 2 & 2 & No & Yes & Yes & big-kernel & 1536 & 1920 & 1 & 2048 & 1960 \\
7 & 2 & 2 & Yes & No & No & big-kernel & 1536 & 1920 & 1 & 2048 & 1960 \\
7 & 2 & 2 & Yes & No & Yes & big-kernel & 1536 & 1920 & 1 & 2048 & 1960 \\
6 & 1 & 2 & No & No & No & big-kernel & 384 & 448 & 1 & 512 & 488 \\
6 & 1 & 2 & No & No & Yes & big-kernel & 384 & 448 & 1 & 512 & 488 \\
6 & 1 & 2 & No & Yes & No & big-kernel & 384 & 448 & 1 & 512 & 488 \\
6 & 1 & 2 & No & Yes & Yes & big-kernel & 384 & 448 & 1 & 512 & 488 \\
6 & 1 & 2 & Yes & No & No & big-kernel & 384 & 448 & 1 & 512 & 488 \\
6 & 1 & 2 & Yes & No & Yes & big-kernel & 384 & 448 & 1 & 512 & 488 \\
\bottomrule
\end{tabular}%
}
\end{table}

\subsection*{B5: Rank Difference}

\textbf{Rank-difference} Computes the rank of the difference of two ciphertexts and tests whether the rank is systematically lower when the plaintexts (or intermediates) are identical. Single and double encryption both separate cleanly, pair by pair: the rank of one difference equals dim I (single) or dim K (double) for equal plaintexts and is full otherwise. This is a key-free equality test available from the first pair; what requires about $N_h \approx dim ~I$ pairs is spanning the ideal (Appendix B3). Collisions at the intermediate level would be eliminated by ZCG (§3.5.7), but data hygiene is required at the original plaintext level together with the other mitigations discussed in §3.3.9. Another route is with access to an encryption oracle: $m$ can be encrypted $N_h$ times.

\label{app:rank_diff}
\begin{table}[h]
\centering
\caption{Rank Difference Test on Synthetic Data}
\label{tab:rank_diff}
\begin{tabular}{lcccccccc}
\toprule
\textbf{Probe} & \textbf{Order} & \textbf{Inner} & \textbf{Outer} & \textbf{Verdict} & $\dim I$ & \textbf{Slots} & \textbf{Rank\_diff} & \textbf{Rank\_same} \\
\midrule
rank-diff & 5 & 3 & S & separates & 224 & 256 & 256 & 224 \\
rank-diff & 7 & 3 & S & separates & 896 & 1024 & 1024 & 896 \\
rank-diff & 6 & 1 & 1 & separates & 128 & 256 & 256 & 192 \\
rank-diff & 6 & 2 & 1 & separates & 256 & 512 & 512 & 448 \\
rank-diff & 7 & 1 & 1 & separates & 256 & 512 & 512 & 384 \\
\bottomrule
\end{tabular}
\end{table}

\subsection*{B6: Rank Single Ciphertext}

Rank-single-ciphertext examines whether the rank of a single ciphertext varies with the support weight of its plaintext. Across all tested configurations, including double encryption, the rank rises with weight: the all-zero plaintext yields rank $\approx \dim I$, partial-support plaintexts yield intermediate ranks, and full-support plaintexts yield full ambient rank. The probe therefore leaks plaintext support weight, not merely the presence of the zero plaintext. Under double encryption the outer rank correlates with the weight of a high-entropy intermediate ciphertext; a weight sample on that intermediate does not map back to a useful constraint on the original low-entropy field, so larger datasets do not become more dangerous. Per-session ideal rotation further caps the number of samples collected under any single ideal.

\label{app:rank_single}
\begin{table}[h]
\centering
\caption{Rank Single Ciphertext}
\label{tab:rank_single}
\resizebox{\textwidth}{!}{%
\begin{tabular}{lcccccccc}
\toprule
\textbf{Probe} & \textbf{Order} & \textbf{Inner} & \textbf{Outer} & \textbf{Verdict} & $\dim I$ & \textbf{Slots} & \textbf{Rank W0} & \textbf{Rank WFull} \\
\midrule
rank-single & 5 & 3 & S & weight-dep. & 224 & 256 & 224 & 256 \\
rank-single & 7 & 3 & S & weight-dep. & 896 & 1024 & 896 & 1024 \\
rank-single & 6 & 1 & 1 & weight-dep. & 128 & 256 & 192 & 256 \\
rank-single & 6 & 2 & 1 & weight-dep. & 256 & 512 & 448 & 512 \\
rank-single & 7 & 1 & 1 & weight-dep. & 256 & 512 & 384 & 512 \\
\bottomrule
\end{tabular}%
}
\end{table}

\subsection*{B7: Sampler Distance}

\textbf{Sampler-distance} Compares the distribution of same-plaintext differences with the uniform distribution on the ambient space (total variation and $\chi^2$). Single encryption yields near-uniform verdicts; double encryption is occasionally biased (at the lowest total margin value of 2). The probe is only a proxy for the quality of the mask over the ideal and does not by itself indicate a recoverable basis.

\label{app:sampler_distance}
\begin{table}[h]
\centering
\caption{Sampler Distance}
\label{tab:sampler_distance}
\begin{tabular}{lcccccccc}
\toprule
\textbf{Probe} & \textbf{Order} & \textbf{In} & \textbf{Out} & \textbf{Verdict} & \textbf{Buckets} & $\chi^2$ & \textbf{Slots} & $\epsilon_{\mathrm{TV}}$ \\
\midrule
sampler-dist & 5 & 3 & S & near-uniform & 256 & 12400.3 & 256 & 0.02788 \\
sampler-dist & 7 & 3 & S & near-uniform & 256 & 15211.2 & 1024 & 0.01538 \\
sampler-dist & 6 & 1 & 1 & biased & 256 & 219883 & 256 & 0.08191 \\
sampler-dist & 6 & 2 & 1 & near-uniform & 256 & 29880 & 512 & 0.02486 \\
sampler-dist & 7 & 1 & 1 & biased & 256 & 287559 & 512 & 0.06615 \\
\bottomrule
\end{tabular}
\end{table}

\subsection*{B8: Distance-Staircase}

\textit{Distance-staircase} plots $\mathrm{rank}(M_d)$ against the number of differing SIMD slots. In all tested configurations, including double encryption, the rank rises monotonically with the number of differing slots (series of nine values from near $\dim I$ to full ambient dimension). The probe therefore leaks approximate Hamming distance between the compared values, not merely equality. Under double encryption the measurement is taken on outer ciphertexts of high-entropy intermediates; the resulting distance sample does not map to a useful constraint on the original low-entropy plaintext slots. Per-session ideal rotation limits the number of such pairs observed under any single ideal.

\label{app:distance_staircase}
\begin{table}[h]
\centering
\caption{Distance Staircase}
\label{tab:distance_staircase}
\begin{tabular}{lcccccccc}
\toprule
\textbf{Probe} & \textbf{Order} & \textbf{In} & \textbf{Out} & \textbf{Verdict} & $\dim I$ & \textbf{Slots} & \textbf{Min Rank} & \textbf{Max Rank} \\
\midrule
distance & 5 & 3 & S & equality-step & 224 & 256 & 224 & 256 \\
distance & 7 & 3 & S & equality-step & 896 & 1024 & 896 & 1024 \\
distance & 6 & 1 & 1 & equality-step & 128 & 256 & 192 & 256 \\
distance & 6 & 2 & 1 & equality-step & 256 & 512 & 448 & 512 \\
distance & 7 & 1 & 1 & equality-step & 256 & 512 & 384 & 512 \\
\bottomrule
\end{tabular}
\end{table}

\subsection*{B9: Affine Harness Test}

This attack is not an operational path under the private-key architecture (no plaintext and no encryption oracle) and is not a game we claim. Matched $(m, c)$ pairs were supplied by the harness.

Under that labelled harness the decryption map is determined once the pairs span the ciphertext subspace. Let $T$ be the ambient outer dimension in bits: $T = 9$ at 7+1+1 and $T = 11$ at 5+3+3. A textbook affine fit then has threshold $2^T + 1$, i.e. 513 and 2049 respectively. Empirical recovery on order-5 keys succeeded at 60 and 180 labelled pairs, consistent with an information-theoretic floor near clearSize + 1 rather than $n + 1$. Rotation windows must be set from clearSize + 1. Empirically, recovery succeeded in 100 of 100 trials at 513 and 2,049 pairs and in none of 100 trials at up to nine pairs fewer.

Partial information can exist at lower rank. Practical defense: deny Carl matched pairs; short key lifetime below measured rank; Client Sovereign / Custody. Algebraic options still open: $\Z_N$, non-commutative rings, epoch rotation. If Carl has no matched pairs and no encryption oracle the labelled game is not instantiated. If pairs are ever supplied under a stable ideal, the cliff is the whole attack.

\label{app:affine}
\begin{table}[h]
\centering
\caption{Affine Test}
\label{tab:affine_test}
\begin{tabular}{lcc}
\toprule
\textbf{Affine Test / Ring Dimensions} & \textbf{7+1+1} & \textbf{5+3+3} \\
\midrule
Pairs needed & 513 & 2049 \\
Up to nine pairs short & 0 / 100 & 0 / 100 \\
At threshold count & 100 / 100 & 100 / 100 \\
Mask annihilation & 25 / 25 & 25 / 25 \\
Forgery accepted & 25 / 25 & 25 / 25 \\
\bottomrule
\end{tabular}
\end{table}

\subsection*{B10: KeyGen-Ensemble Measurements}

This appendix records the KeyGen-ensemble measurements underlying the Balance statement in A.2.

Balance condition: what was measured:

\textbf{Parameters:} Order 5, extra-order 3, q = 1,000,003 (clearSize 32, cipherSize, n = 256), fixed triangular basis. Experiments 1--3 single layer; Experiment 4 repeats the decisive case under double encryption (5+3+3, n = 2048).

\textbf{Tools:} \texttt{Collaboration/key\_security\_analysis/\{balance\_test, balance\_multi, balance\_control, balance\_de\}.cpp}

\textbf{Experimental design:} Appendix~A.2 asks whether, for two fixed plaintexts $m_0$ and $m_1$, the ensembles
\[
\{\Enc(sk, m_0)\}_{sk \leftarrow \KeyGen} \quad \text{and} \quad \{\Enc(sk, m_1)\}_{sk \leftarrow \KeyGen}
\]
are statistically close. The randomness sampled is $\KeyGen$ plus encryption randomness---not repeated encryptions under one key. Each iteration accordingly constructs a fresh \texttt{Context}, a fresh \texttt{KeyGenerator}, and a fresh \texttt{Key}, then encrypts both $m_0$ and $m_1$ under that key. One ciphertext per message per key enters the ensemble. Statistical distance is computed as
\[
\frac{1}{2} \sum_{x} |p_A(x) - p_B(x)|
\]
over the empirical histograms.

Plaintexts are built by SIMD slot (orthogonal basis), so ``weight" means the number of nonzero orthogonal slots.

\label{app:keygen_ensemble}
\begin{table}[h]
\centering
\caption{Statistical Distance of KeyGen Ensembles}
\label{tab:keygen_stat_dist}
\begin{tabular}{lcccc}
\toprule
$m_0$ & $m_1$ & $\mathrm{rank}(m_0)$ & $\mathrm{rank}(m_1)$ & $\Delta$ \\
\midrule
weight 1 & weight 2 & $225 \times 40$ & $226 \times 40$ & 1.00 \\
all-zero & full support & $224 \times 40$ & $256 \times 40$ & 1.00 \\
weight 5 & weight 5 & $229 \times 40$ & $229 \times 40$ & 0.00 \\
weight 16 & weight 16 & $240 \times 40$ & $240 \times 40$ & 0.00 \\
\bottomrule
\end{tabular}
\end{table}

\begin{table}[h]
\centering
\caption{Same Weight, Four Statistics ($K=300$)}
\label{tab:same_weight_stats}
\begin{tabular}{lc}
\toprule
\textbf{Statistic} & \textbf{Result} \\
\midrule
$\mathrm{Rank}(M_c)$ & $\Delta = 0.000$ \\
Zero-coefficient count & $\Delta = 0.000$ \\
$\mathrm{Coefficient}[0] \pmod{97}$ & $\Delta = 0.320$ \\
Mean coefficient over 256 positions & 495,743 vs 494,727 ($z=0.66$) \\
\bottomrule
\end{tabular}
\end{table}

\begin{table}[h]
\centering
\caption{Control --- Noise Levels}
\label{tab:control_noise}
\begin{tabular}{lc}
\toprule
\textbf{Comparison} & $\Delta$ \\
\midrule
Same message, two disjoint key sets & 0.303 (noise floor) \\
$m_0$ vs $m_1$, equal weight & 0.320 \\
\bottomrule
\end{tabular}
\end{table}

\begin{table}[h]
\centering
\caption{Double Encryption ($5+3+3$, $n=2048$, $K=20$)}
\label{tab:de_eval}
\begin{tabular}{lcccc}
\toprule
$m_0$ & $m_1$ & $\mathrm{rank}(m_0)$ & $\mathrm{rank}(m_1)$ & $\Delta$ \\
\midrule
weight 1 & weight 2 & $2017 \times 20$ & $2018 \times 20$ & 1.00 \\
all-zero & full support & $2016 \times 20$ & $2016 \times 20$ & 1.00 \\
weight 5 & weight 5 & $2021 \times 20$ & $2021 \times 20$ & 0.00 \\
\bottomrule
\end{tabular}
\end{table}

\end{document}